\title{Incomplete Transition Complexity of Basic Operations on Finite
  Languages \thanks{This work was partially funded by the European
    Regional Development Fund through the programme COMPETE and by the
    Portuguese Government through the FCT under projects
    PEst-C/MAT/UI0144/2011 and CANTE-PTDC/EIA-CCO/101904/2008.}}
\author{Eva Maia\thanks{Eva Maia is funded by FCT grant
    SFRH/BD/78392/2011.}, Nelma Moreira, Rog\'{e}rio Reis}
\institute{CMUP \& DCC, Faculdade de Ciências da Universidade do Porto \\
  Rua do Campo Alegre, 4169-007 Porto, Portugal\\
e-mail:{\tt\{emaia,nam,rvr\}@dcc.fc.up.pt}}
\begin{document}
\maketitle

\begin{abstract}
  The state complexity of basic operations on finite languages
  (considering complete DFAs) has been in studied the literature.
   In this paper we study the incomplete
  (deterministic) state and transition complexity on finite languages
  of boolean operations, concatenation, star, and reversal. For all
  operations we give tight upper bounds for both descriptional
  measures. We correct the published state complexity of concatenation for
  complete DFAs and
  provide a tight upper bound for the case when the \emph{right}
  automaton is larger than the \emph{left} one. For all binary
  operations the tightness is proved using family languages with
  a variable alphabet size. In general the operational complexities
  depend not only on the complexities of the operands but also on
  other refined measures.
\end{abstract}

\section{Introduction}
\label{sec:introduction}
Descriptional complexity studies the measures of complexity of
languages and operations. These studies are motivated by the need to
have good estimates of the amount of resources required to manipulate
the smallest representation for a given language. In general, having
succinct objects will improve our control on software, which may
become smaller and more efficient.  Finite languages are an important
subset of regular languages with many applications in compilers,
computational linguistics, control and verification,
etc.~\cite{owens09:_regul_expres_deriv_re_examin,beesley03:_finit_state_morph,maurel05:_autom_and_diction,cassandras06:_introd_to_discr_event_system}. In
those areas it is also usual to consider deterministic finite automata
(DFA) with partial transition functions. This motivates the study of
the transition complexity of DFAs (not necessarily complete), besides
the usual state complexity. The operational transition complexity of
basic operations on regular languages was studied by Gao \emph{et
  al.}~\cite{gao11:_trans_compl_of_incom_dfas} and Maia \emph{et
  al.}~\cite{maia13:_incom_trans_compl_of_some_basic_operat}.  In this
paper we continue that line of research by considering the class of finite
languages. For finite languages, Salomaa and
Yu~\cite{salomaa97:_nfa_to_dfa_trans_for} showed that the state
complexity of the determinization of a nondeterministic automaton
(NFA) with $m$ states and $k$ symbols is $\Theta(k^{\frac{m}{1+\log
    k}})$ (lower than $2^m$ as it is the case for general regular languages). C\^ampeanu \emph{et
  al.}~\cite{campeanu99:_state_compl_of_basic_operat_finit_languag}
studied the operational state complexity of concatenation, Kleene
star, and reversal. Finally, Han and
Salomaa~\cite{han08:_state_compl_of_union_and} gave tight upper bounds
for the state complexity of union and intersection on finite
languages. In this paper we give tight upper bounds for the transition
complexity of all the above operations. 
 We
correct the upper bound for the state complexity of concatenation
\cite{campeanu99:_state_compl_of_basic_operat_finit_languag},
and show that if the \emph{right} automaton is larger than the
\emph{left} one, the upper bound is only reached using an alphabet of
variable size.  Note
that, the difference between the state complexity for non necessarily
complete DFAs and for complete DFAs is at most one. 
Table~\ref{table:wtc} presents a comparison of the transition
complexity on regular and finite languages, 
where the new results are highlighted. 
All the proofs not
presented in this paper can be found in an extended version of this work\footnote{\url{http://www.dcc.fc.up.pt/Pubs/TReports/TR13/dcc-2013-02.pdf}}.

\begin{table}
\centering
\begin{tabular}{|c||c|c||c|c|}
    \hline
    Operation & Regular&$|\Sigma|$  & Finite & $|\Sigma|$      \\ \hline\hline
    $L_1\cup L_2$ & $2n(m+1)$ & $2$ & $\textbf{3(mn-n-m) +2}$ & $f_1(m,n)$ \\ \hline
    $L_1\cap L_2$ &  $nm$ &$1$  &\multicolumn{1}{p{5.5cm}|}{
      $\mathbf{(m-2)(n-2)(2+\sum_{i=1}^{\min(m,n)-3}(m-}$ $\mathbf{2-i)(n-2-i))+2}$ } 
& $f_2(m,n)$ \\ \hline
    $L^C$ & $m+2$ & $1$ &$\mathbf{m+1}$ & $1$  \\ \hline
    \multirow{2}{*}{$L_1L_2$} & 
$2^{n-1}(6m+3)-5$,& \multirow{2}{*}{$3$} 
&$\mathbf{6.2^{n-1}-8}$, if $m+1\geq n$& $2$\\ \cline{4-5}
&if $m,n \geq 2$ && See Proposition~\ref{prop:cont1} \textbf{(\ref{eq:cont1})}& $n-1$ \\\hline    
\multirow{2}{*}{$L^\star$} & \multirow{2}{*}{$3.2^{m-1}-2$, if $m \geq 2$} 
& \multirow{2}{*}{$2$} 
&\multicolumn{1}{p{5.2cm}|}{$\mathbf{9\cdot2^{m-3}-2^{m/2}-2}$, if $m$ is odd }& \multirow{2}{*}{$3$}\\ \cline{4-4}
&&& $\mathbf{9\cdot 2^{m-3}-2^{(m-2)/2}-2}$, if $m$ is even& \\\hline
    \multirow{2}{*}{$L^R$} & \multirow{2}{*}{$2(2^m-1)$} & \multirow{2}{*}{$2$} & 
   $\mathbf{2^{p+2}-7}$, if $m=2p$ &\multirow{2}{*}{$2$} \\ \cline{4-4}
&&&$\mathbf{3\cdot 2^{p}-8}$, if  $m=2p-1$&  \\\hline
\multicolumn{3}{c}{}
    \end{tabular}
    \caption{Incomplete transition complexity for regular and finite
      languages, where $m$ and $n$ are the (incomplete) state complexities
      of the operands, $f_1(m,n)=(m-1)(n-1)+1$ and $f_2(m,n)= (m-2)(n-2)+1$.}
\label{table:wtc}
\end{table}
%\vspace{-1cm}

\section{Preliminaries}
\label{sec:preliminaries}

We recall some basic notions about finite automata and regular
languages. For more details, we refer the reader to the standard
literature~\cite{hopcroft79:_introd_to_autom_theor_languag_and_comput,yu97:_handb_formal_languag,shallit08:_secon_cours_in_formal_languag}.

Given two integers $m,n\in \mathbb{N}$ let $[m,n]=\{i\in \mathbb{N}\mid m\leq
i \leq n\}$.
A \emph{deterministic finite automaton} (DFA) is a five-tuple
$A=(Q,\Sigma,\delta,q_0,F)$ where $Q$ is a finite set of states,
$\Sigma$ is a finite input alphabet, $q_0\in Q$ is the initial
state, $F \subseteq Q$ is the set of final states, and $\delta$ is the
transition function $Q \times \Sigma \to Q$. 
Let $|\Sigma| = k$, $|Q|=n$, and without lost of generality, we consider $Q=[0,n-1]$ with
$q_0=0$. The transition function 
can be naturally extended to sets in $2^Q$ and to words $w\in \Sigma^\star$. A DFA is
\emph{complete} if the transition function  is total. In
this paper we consider DFAs %accepting finite languages 
to be not necessarily complete, \emph{i.e.} with partial transition functions.
% The transition function is extended in the usual way to a function
%$\hat{\delta}:Q \times \Sigma^\star \to Q$. This function can also be
%used in sets --- 
%$\hat{\delta}:2^Q \times \Sigma^\star \to 2^Q$.  
The \textit{language} accepted by $A$ is $\mathcal{L}(A)$ $=\{w\in
\Sigma^\star\ |\ \delta(0,w) \in F\}$. Two DFAs are
\textit{equivalent} if they accept the same language. For each regular
language, considering or not a total transition function,
%considering a total transition function or not a total one
there exists a unique minimal complete DFA with a least number of
states. The \textit{left-quotient} of $L\subseteq \Sigma^\star$ by $x
\in \Sigma^\star$ is $D_xL=\{z\ |\ xz \in L \}$. The equivalence
relation $\equiv_L \subseteq \Sigma^\star \times \Sigma^\star$ is
defined by $ x \equiv_L y$ if and only if $D_xL=D_yL$. The
\textit{Myhill-Nerode Theorem} states that a language $L$ is regular
if and only if $\equiv_{L}$ has a finite number of equivalence
classes, \textit{i.e.}, $L$ has a finite number of left quotients.
This number is equal to the number of states of the minimal complete
DFA. The \emph{state complexity}, $sc(L)$, of a regular language $L$
is the number of states of the minimal complete DFA of $L$. If the
minimal DFA is not complete its number of states is the number of left
quotients minus one (the \emph{dead state}, that we denote by
$\Omega$, is removed). The \emph{incomplete state complexity} of a
regular language $L$ ($isc(L)$) is the number of states of the minimal
DFA, not necessarily complete, that accepts $L$. Note that $isc(L)$ is
either equal to $sc(L)-1$ or to $sc(L)$. The \textit{incomplete
  transition complexity}, $itc(L)$, of a regular language $L$ is the
minimal number of transitions over all DFAs that accepts $L$.  We omit
the term \emph{incomplete} whenever the model is explicitly given. 
A $\tau$-\emph{transition} is a transition labeled by $\tau\in\Sigma$.
% (represented by $\delta(i,\tau)$, where $i \in Q$)
 % and the number of $\tau$-transitions of a DFA
% $A$ is denoted by $t_\tau(A)$. The number of $\tau$-transitions
% leaving the states in $[i,j]$ is denoted by $t_\tau(A,[i,j])$.
% We also define $u_\tau(A)= |Q|-t_\tau(A)$, i.e. the
% number of states without $\tau$-transitions,  and $u_\tau(A,[i,j])$ to
% be  the number of missing transitions from states in  $[i,j]$.  
The $\tau$-\textit{transition complexity} of $L$, $itc_\tau(L)$ is the
minimal number of $\tau$-transitions of any DFA recognizing $L$.  It
is known that $itc(L) = \sum_{\substack{\tau \in \Sigma}}
itc_{\tau}(L)$~\cite{gao11:_trans_compl_of_incom_dfas,maia13:_incom_trans_compl_of_some_basic_operat}.

The \emph{complexity of an operation} on regular languages is
the (worst-case) complexity of a language resulting from the
operation, considered as a function of the complexities of the
operands. Usually an \emph{upper bound} is obtained by
providing an algorithm, which given representations of the operands
(\emph{e.g.} DFAs), constructs a model (\emph{e.g.} DFA)
that accepts the language resulting from the referred operation. 
 To prove that an upper bound is \emph{tight}, for each
operand we can give a family of languages (parametrized by the
complexity measures and called \emph{witnesses}), such that the resulting language achieves that
upper bound.

For determining the transition complexity of an operation, we also
consider the following measures and refined numbers of transitions.
Let $A=([0,n-1],\Sigma,\delta,0,F)$ be a DFA, $\tau \in \Sigma$, and $i
\in [0,n-1]$. We define $f(A)=|F|$, $f(A,i)=|F\cap [0,i-1]|$. We
denote  by $t_\tau(A, i)$  and $in_\tau(A, i)$ respectively the number of
transitions leaving and reaching $i$. As $t_\tau(A,i)$ is a boolean
function, the complement is $\overline{t}_\tau(A,i) =1-t_\tau(A,i)$. 
Let
$s_\tau(A)=t_\tau(A,0)$,
%$\overline{s}_\tau(A)=\overline{t}_\tau(A,0)$, 
$a_\tau(A)=\sum_{i\in  F}in_\tau(A,i)$, 
$e_\tau(A)=\sum_{i\in F}t_\tau(A,i)$,
%$\overline{e}_\tau(A)=\sum_{i\in F}\overline{t}_\tau(A,i)$,
$t_\tau(A)=\sum_{i\in Q}t_\tau(A,i)$, 
%$\overline{t}_\tau(A)=\sum_{i\in  Q}\overline{t}_\tau(A,i)$, 
$t_\tau(A, [k,l])=\sum_{i\in [k,l]}t_\tau(A,i)$, 
and  the respective complements 
$\overline{s}_\tau(A)=\overline{t}_\tau(A,0)$,
$\overline{e}_\tau(A)= \sum_{i\in F}\overline{t}_\tau(A,i)$, etc.
%$\overline{t}_\tau(A, [k,l])=\sum_{i\in  [k,l]}\overline{t}_\tau(A,i)$.
% The number of $\tau$-transitions of a DFA
% $A$ is denoted by $t_\tau(A)$. We also define $u_\tau(A)= |Q|-t_\tau(A)$, i.e. the
% number of states without $\tau$-transitions. 
% We denote $f(A)=|F|$, $f(A,i)=|F\cap [0,i-1]|$,
% $in_\tau(i)$ is $1$ if exists a $\tau$-transition reaching the state
% $i$, $0$ otherwise,
% and $out_\tau(i)$ is $1$ if exists a $\tau$-transition leaving $i$,
% $0$ otherwise, and
% $i_\tau( A)=out_\tau(0)$.  
% Note that $i_\tau$, $in_\tau$ and
% $out_\tau$ are boolean functions, so we can also refer to their
% complement ($\overline{i}_\tau$, $\overline{in}_\tau$ and
% $\overline{out}_\tau$, respectively).  Finally, let
% $a_\tau(A)=\sum_{i\in F}in_\tau(i)$ be the number of
% $\tau$-transitions reaching the final states of $A$,
% $e_\tau(A)=\sum_{i\in F}out_\tau(i)$ be the number of
% $\tau$-transitions leaving the final states of $A$ and
% $\overline{e}_\tau(A)=\sum_{i\in F}\overline{out}_\tau(i)$. 
Whenever there is
no ambiguity we omit $A$ from the above definitions.  All the above
measures, can be defined for a regular language $L$, considering the
measure values for its minimal DFA. 
For instance, we have, $f(L)$, $f(L,i)$, $a_\tau(L)$, $e_\tau(L)$, etc.
%Thus, we have, respectively,
%$f(L)$, $f(L,i)$, $\overline{t}_\tau(L)$, $\overline{t}_\tau(L,i)$,
%$a_\tau(L)$, $e_\tau(L)$, and $\overline{e}_\tau$. 
We define $s(L)=\sum_{\tau \in
  \Sigma}s_\tau(L)$ and $a(L)=\sum_{\tau \in \Sigma}a_\tau(L)$.

Let $A=([0,n-1],\Sigma,\delta,0,F)$ be a minimal DFA accepting a
finite language, where the states are assumed to be topologically
ordered.  Then, $s(\mathcal{L}(A))= 0$ and there is exactly one final state,
denoted $\pi$ and called \emph{pre-dead}, such that $\sum_{\tau\in
  \Sigma}t_\tau(\pi)= 0.$ The \emph{level} of a state $i$ is the
size of the shortest path from the initial state to $i$, and never
exceeds $n-1$. The level of $A$ is the level of $\pi$.

\section{Union}
\label{sec:union}
Given two incomplete DFAs $A=([0,m-1],\Sigma,\delta_A,0,F_A)$ and
$B=([0,n-1],\Sigma,\delta_B,0,F_B)$ the adaptation of the classical
Cartesian product construction  can be used to obtain a DFA accepting $\mathcal{L}(A)\cup
\mathcal{L}(B)$~\cite{maia13:_incom_trans_compl_of_some_basic_operat}.

\begin{proposition}  
\label{prop:unions}
For any $m$-state incomplete DFA $A$  and any $n$-state incomplete DFA
$B$, both accepting finite languages,  $mn-2$ states are sufficient 
for a DFA accepting $\mathcal{L}(A)\cup
\mathcal{L}(B)$.
\end{proposition}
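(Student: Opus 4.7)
The plan is to apply the adaptation of the Cartesian product to incomplete DFAs from~\cite{maia13:_incom_trans_compl_of_some_basic_operat}: take the state set to be $(([0,m-1] \cup \{\Omega_A\}) \times ([0,n-1] \cup \{\Omega_B\})) \setminus \{(\Omega_A,\Omega_B)\}$, initial state $(0,0)$, componentwise transitions (with an undefined transition in $A$ or $B$ sending the corresponding coordinate to $\Omega$, and any transition reaching $(\Omega_A,\Omega_B)$ being left undefined), and final set $\{(p,q) : p \in F_A \vee q \in F_B\}$. This recognises $\mathcal{L}(A) \cup \mathcal{L}(B)$ and has $(m+1)(n+1)-1 = mn+m+n$ states a priori, so the task reduces to showing that minimisation cuts off $m+n+2$ states.

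To do so I would exploit two facts from Section~\ref{sec:preliminaries} about finite-language minimal DFAs: both $A$ and $B$ are acyclic under a topological order, and each has a \emph{unique} pre-dead state $\pi_A$, resp.\ $\pi_B$. Immediately, the three states $(\pi_A,\pi_B)$, $(\pi_A,\Omega_B)$ and $(\Omega_A,\pi_B)$ are each accepting with no outgoing transitions, hence equivalent to a single ``final sink'' of the product, which already saves $2$ states. Next, for every $q \in F_B$ the states $(\pi_A,q)$ and $(\Omega_A,q)$ recognise the same quotient $\mathcal{L}(B_q)$ and therefore merge, and symmetrically $(p,\pi_B) \equiv (p,\Omega_B)$ for each $p \in F_A$. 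Finally, I would argue by a level-based reachability analysis that every remaining ``one-dead'' state $(\Omega_A,q)$ with $q \notin F_B$ (resp.\ $(p,\Omega_B)$ with $p \notin F_A$) is either unreachable from $(0,0)$ or already identified with a main-grid state along a shared trajectory, because the acyclic structure forces such a state to be reached by a word that has already exhausted the topological order on the $A$-side (resp.\ $B$-side).

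I expect the main obstacle to be this last step, since the levels at which $A$ and $B$ die need not coincide with the levels of $\pi_A$ and $\pi_B$, and the interaction between these four levels drives a small case split. The cleanest route is probably an induction on the level $k$ of the reached state, tracking at each level the set of reachable pairs $(\delta_A(0,w),\delta_B(0,w))$ and comparing the $\Omega$-row and $\Omega$-column contributions against the main-grid contributions at the same level; a direct tally then closes the count at $mn + m + n - (m + n + 2) = mn - 2$, giving the claimed upper bound.
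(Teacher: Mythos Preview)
Your construction and the merge of $(\pi_A,\pi_B)$, $(\pi_A,\Omega_B)$, $(\Omega_A,\pi_B)$ are fine, but the remaining $m+n$ states you need to eliminate are \emph{not} in the $\Omega$-row and $\Omega$-column. Your claim that each $(\Omega_A,q)$ with $q\notin F_B$ is ``unreachable or already identified with a main-grid state'' is false in general: if $A$ has level strictly smaller than $B$, then for many $q$ the state $(\Omega_A,q)$ is reached by every word of the appropriate length, is non-final, and is inequivalent to every $(p,q')$ with $p\neq\Omega_A$ (the $A$-coordinate still influences acceptance from such $(p,q')$). So the level-based induction you sketch cannot close the count, and the ``main obstacle'' you anticipate is in fact a genuine dead end.

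The missing idea is much simpler and lives on the \emph{initial}-state side, not the dead-state side. Because $A$ and $B$ are minimal DFAs for finite languages, they are acyclic, so no nonempty word maps $0$ back to $0$ in either automaton. Hence in the product the only reachable state whose $A$-coordinate is $0$ is $(0,0)$ itself, and likewise for the $B$-coordinate: all $(0,j)$ with $j\in[1,n-1]\cup\{\Omega_B\}$ and all $(i,0)$ with $i\in[1,m-1]\cup\{\Omega_A\}$ are unreachable. That kills exactly $m+n$ states at once, and together with the dead state $(\Omega_A,\Omega_B)$ and the two merged pre-dead states you already have, the count is $(m+1)(n+1)-(m+n)-1-2=mn-2$.
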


\begin{proof}
Here we adapt the  proof of  Han and Salomaa~\cite{han08:_state_compl_of_union_and}. 
In the product automaton, the set of states is included in $([0,m-1]\cup \{\Omega_A\})\times([0,n-1]\cup \{\Omega_B\})$,
%in the cartesian product of the states of DFA $A$ and DFA $B$.
%%% report 
%\begin{figure}[htb]
%\centering
%\input{tbl1.tex}
%\caption{Representation of states as Cartesian product.}
%\label{fig:prod}
%\end{figure}
%%%
% Let the states .  
where $\Omega_A$ and $\Omega_B$ are the dead states of the DFA $A$
and DFA $B$, respectively.  The states of the form $(0,i)$, where
$i\in[1,n-1]\cup\{\Omega_B\}$, and of the form $(j,0)$, where $j\in
[1,m-1]\cup\{\Omega_A\}$, are not reachable from $(0,0)$ because the
operands represent finite languages; the states $(m-1,n-1)$,
$(m-1,\Omega_B)$ and $(\Omega_A,n-1)$ are equivalent because they are
final and they do not have out-transitions; the state
$(\Omega_A,\Omega_B)$ is the dead state and because we are dealing
with incomplete DFAs we can ignore it.  Therefore the number of states
of the union of two incomplete DFAs accepting finite languages is
$(m+1)(n+1) - (m+n) -2 -1= mn-2$.

\end{proof}

\begin{proposition}
\label{prop:uniont}  
For any finite languages $L_1$ and $L_2$ with $isc(L_1)=m$ and
$isc(L_2)=n$, one has
\begin{equation*}
  \begin{aligned}
itc(L_1\cup L_2) \leq & \sum_{\substack{\tau \in
    \Sigma}} \left (s_\tau(L_1) \boxplus s_\tau(L_2) - (itc_\tau(L_1)
  - s_\tau(L_1)) (itc_\tau(L_2)-s_\tau(L_2))\right)\\&
 + n (itc(L_1) -i(L_1))+
    m (itc(L_2)-i(L_2)), 
  \end{aligned}
\end{equation*}
where for $x,y$ boolean values, $x \boxplus y = \min(x+y,1)$.
\end{proposition}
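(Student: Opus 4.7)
The plan is to apply the Cartesian product construction from the proof of Proposition~\ref{prop:unions} and count the transitions of the resulting incomplete DFA symbol by symbol, overcounting states where necessary in order to obtain an upper bound. I would partition the potentially reachable states of the product into four disjoint groups: the initial state $(0,0)$; the \emph{joint} states $(i,j)$ with $i\in[1,m-1]$ and $j\in[1,n-1]$; the \emph{$A$-live} states $(i,\Omega_B)$ with $i\in[1,m-1]$; and the \emph{$B$-live} states $(\Omega_A,j)$ with $j\in[1,n-1]$. The three equivalent states $(m-1,n-1)$, $(m-1,\Omega_B)$, $(\Omega_A,n-1)$ identified in Proposition~\ref{prop:unions} have no outgoing transitions, so merging them does not change the transition tally, and $(\Omega_A,\Omega_B)$ is discarded as the dead state. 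Because only an upper bound is sought, we may safely pretend that every state in the partition is reachable.

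Fix $\tau\in\Sigma$ and abbreviate $u_\tau=itc_\tau(L_1)-s_\tau(L_1)$ and $v_\tau=itc_\tau(L_2)-s_\tau(L_2)$ for the number of non-initial $\tau$-transitions in the minimal DFAs of $L_1$ and $L_2$. The per-class tallies would then be: from $(0,0)$, a $\tau$-transition exists iff at least one of $\delta_A(0,\tau)$, $\delta_B(0,\tau)$ is defined, contributing $s_\tau(L_1)\boxplus s_\tau(L_2)$ (this is exactly where $\boxplus$ is needed in place of ordinary addition, to avoid double counting when both initial $\tau$-transitions are present); the $A$-live and $B$-live classes contribute $u_\tau$ and $v_\tau$ transitions respectively; and from the joint states, a $\tau$-transition from $(i,j)$ exists iff $\delta_A(i,\tau)$ or $\delta_B(j,\tau)$ is defined, so by inclusion--exclusion on the $(m-1)(n-1)$ pairs the count is $(m-1)(n-1)-(m-1-u_\tau)(n-1-v_\tau)=(n-1)u_\tau+(m-1)v_\tau-u_\tau v_\tau$.

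Adding the four contributions per symbol and then summing over $\Sigma$, the linear $u_\tau$-terms combine as $u_\tau+(n-1)u_\tau=n u_\tau$, and symmetrically $v_\tau+(m-1)v_\tau=m v_\tau$, so the total becomes
\[
\sum_{\tau\in\Sigma}\bigl(s_\tau(L_1)\boxplus s_\tau(L_2)-u_\tau v_\tau\bigr)\,+\,n\sum_{\tau\in\Sigma}u_\tau\,+\,m\sum_{\tau\in\Sigma}v_\tau,
\]
which coincides with the stated bound once we recognise $\sum_\tau u_\tau=itc(L_1)-i(L_1)$ and $\sum_\tau v_\tau=itc(L_2)-i(L_2)$. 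I expect the main technical point to be the careful bookkeeping: isolating the initial-state contribution so that $\boxplus$ appears only there, and performing the inclusion--exclusion on joint states so that the cross term $-u_\tau v_\tau$ — which captures precisely the savings when both coordinates of a joint state admit a $\tau$-transition — emerges in exactly the form given by the proposition.
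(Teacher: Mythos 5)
Your proof is correct and follows essentially the same route as the paper: both count the $\tau$-transitions of the product automaton symbol by symbol, isolate the initial-state contribution as $s_\tau(L_1)\boxplus s_\tau(L_2)$, and reduce the remaining count to $n\,u_\tau+m\,v_\tau-u_\tau v_\tau$ (the paper partitions transition pairs by which component is defined rather than partitioning states and using inclusion--exclusion, but the algebra is identical). Your identification of $\sum_\tau u_\tau$ with $itc(L_1)-i(L_1)$ matches the paper's own reading of $i(L)$ as $s(L)$.
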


\begin{proof}
In the product automaton, the $\tau$-transitions can be represented as pairs $(\alpha_i,
\beta_j)$ where $\alpha_i$ ($\beta_j$) is $0$ if there exists a
$\tau$-transition leaving the state $i$ ($j$) of  DFA $A$ ($B$), respectively, or
$-1$ otherwise. The resulting DFA can not have transitions of the form
$(-1,-1)$, neither
%%imagem 2
%\begin{figure}[htb]
%\centering
%\input{tbl2.tex}
%\caption{Representation of transitions as Cartesian product.}
%\label{fig:prodtra}
%\end{figure}
%%%
of the form $(\alpha_0,\beta_j)$, where
$j\in[1,n-1]\cup\{ \Omega_B\}$ nor of the form $(\alpha_i,\beta_0)$,
where $i\in[1,m-1]\cup\{\Omega_A\}$, as happened in the
case of states.  Thus, the  number of
$\tau$-transitions for $\tau \in \Sigma$ are:

\begin{equation*}
\small{
\begin{aligned}
s_\tau(A) \boxplus s_\tau(B)  +& t_\tau(A,[1,m-1])t_\tau(B,[1,n-1])
+ t_\tau(A,[1,m-1]) (\overline{t}_\tau(B,[1,n-1])+1)  
 \\ + &  (\overline{t}_\tau(A, [1,m-1])+1)t_\tau(B, [1,n-1]) \ =\\
s_\tau(A) \boxplus s_\tau(B) +& t_\tau(A, [1,m-1])t_\tau(B, [1,n-1])
+  t_\tau(A, [1,m-1]) (n-t_\tau(B,[1,n-1]))
\\ + &  (m-t_\tau(A, [1,m-1]))t_\tau(B, [1,n-1])\; = \\
s_\tau(A) \boxplus s_\tau(B)  +& n t_\tau(A, [1,m-1]) + mt_\tau(B,
[1,n-1])- t_\tau(A, [1,m-1])t_\tau(B,[1,n-1]).
\end{aligned}
}
  \end{equation*}
  \noindent As the DFAs are minimal, 
  $\sum_{\tau\in\Sigma} t_\tau(A,[1,m-1])$ corresponds to
  $itc(L_1)- s(L_1)$, and analogously for $B$. Therefore the proposition holds.
\end{proof}

Han and Salomaa proved \cite[Lemma
3]{han08:_state_compl_of_union_and} that the upper bound for the
number of states can not be reached with a fixed alphabet.  The
witness families for the incomplete complexities coincide with the
ones already presented for the state
complexity. As we do not consider the dead state, our presentation is
slightly different.  Let $m, n\geq 1$ and $\Sigma =
\{b,c\}\cup \{a_{ij}\mid i\in[1, m-1], j\in [1, n-1],\  (i,j) \ne
(m-1,n-1) \}$.  Let $A=([0,m-1],\Sigma,\delta_A,0,\{m-1\})$ where
$\delta_A(i,b)=i+1$ for $i\in [0, m-2]$ and $\delta_A(0,a_{ij})=i$ for
$j\in [1,n-1],\  (i,j)\ne (m-1,n-1)$.  Let $B=([0,n-1],\Sigma,
\delta_B,0,\{n-1\})$, where $\delta_B(i,c)=i+1$ for $i\in [0, n-1]$
and $\delta_B(0,a_{i,j})=j$ for $j\in [1,n-1], i \in [1, m-1],\  (i,j) \ne (m-1,n-1)$.

\begin{theorem}
For any  integers $m\geq 2$ and $n\geq 2$ there exist an $m$-state 
DFA $A$  and an $n$-state  DFA
$B$, both accepting finite languages, such that any DFA accepting $\mathcal{L}(A)\cup
\mathcal{L}(B)$ needs at least $mn-2$ states and
$3(mn-n-m)+2$ transitions, if the size of the alphabet can depend on
$m$ and $n$.
\end{theorem}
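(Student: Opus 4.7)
The plan is to apply the product construction of Proposition~\ref{prop:unions} to the witness DFAs $A$ and $B$, enumerate the reachable states, verify that only one pair collapses under minimization, and then count transitions label by label. The reachable states split into four groups: the initial $(0,0)$; the $b$-path states $(i,\Omega_B)$ for $i\in[1,m-1]$; the $c$-path states $(\Omega_A,j)$ for $j\in[1,n-1]$; and the ``cross'' states $(i,j)$, each reached from $(0,0)$ by the single letter $a_{ij}$ and then extended by $b$s into the $A$-path or $c$s into the $B$-path. Since $(m-1,n-1)$ is excluded from the index set, this gives $1+(m-1)+(n-1)+((m-1)(n-1)-1)=mn-1$ reachable states.

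Next I verify that exactly one pair is equivalent, matching the $mn-2$ bound from Proposition~\ref{prop:unions}. Computing left-quotients: $(i,\Omega_B)\mapsto\{b^{m-1-i}\}$, $(\Omega_A,j)\mapsto\{c^{n-1-j}\}$, a cross state $(i,j)$ maps to $\{b^{m-1-i},c^{n-1-j}\}$, and $(0,0)$ maps to the full union language. The only coincidence is $(m-1,\Omega_B)\equiv(\Omega_A,n-1)$, both being $\{\varepsilon\}$; all remaining quotients are pairwise distinct because the lengths of the $b$- and $c$-suffixes uniquely recover $i$ and $j$, and $(0,0)$ is isolated by the $a_{ij}$ prefixes. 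Crucially, removing $(m-1,n-1)$ from the index set prevents any cross state from degenerating to $\{\varepsilon\}$ and merging with the final sink, which is where the upper bound would otherwise fail.

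Finally I count the transitions of this (unique) minimal incomplete DFA by label. Each of the $(m-1)(n-1)-1$ letters $a_{ij}$ contributes exactly one transition, from $(0,0)$ to $(i,j)$. A $b$-transition emanates from $(0,0)$, from each axis state $(i,\Omega_B)$ with $i\leq m-2$, and from each cross state $(i,j)$ with $i\leq m-2$ (of which there are $(m-2)(n-1)$), totalling $1+(m-2)+(m-2)(n-1)=1+(m-2)n$. Symmetrically the $c$-transitions number $1+(n-2)m$. The sum simplifies to $3(mn-m-n)+2$, matching the upper bound of Proposition~\ref{prop:uniont}. The main delicacy lies in the distinguishability analysis above; the transition count itself is a direct consequence once the structure of the minimal DFA is pinned down, apart from carefully tracking the boundary indices $i=m-1$ or $j=n-1$ where outgoing $b$- or $c$-transitions disappear.
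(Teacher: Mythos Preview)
Your proof is correct and follows the natural route. The paper itself omits the proof of this theorem (deferring to the extended version and to Han and Salomaa for the state count), so there is no detailed argument to compare against; but your approach---build the product of the specified witnesses, list the $mn-1$ reachable states, compute left-quotients to show that only $(m-1,\Omega_B)$ and $(\Omega_A,n-1)$ merge, then count transitions label by label---is exactly the expected verification and matches the structure implicit in Propositions~\ref{prop:unions} and~\ref{prop:uniont}. The arithmetic checks out: $(m-1)(n-1)-1$ transitions for the $a_{ij}$, $1+(m-2)n$ for $b$, $1+(n-2)m$ for $c$, summing to $3(mn-m-n)+2$.
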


\section{Intersection}
\label{sec:intersection}
Given two incomplete DFAs $A=([0,m-1],\Sigma,\delta_A,0,F_A)$ and
$B=([0,n-1],\Sigma,\delta_B,0,F_B)$ we can obtain a DFA accepting
$\mathcal{L}(A)\cap \mathcal{L}(B)$ by the standard product automaton
construction.

\begin{proposition}
\label{prop:statesint}  
For any $m$-state 
%incomplete 
DFA $A$  and any $n$-state 
%incomplete 
DFA
$B$, both accepting finite languages,  $mn-2m-2n+6$ states are sufficient 
for a DFA accepting $\mathcal{L}(A)\cap
\mathcal{L}(B)$.
\end{proposition}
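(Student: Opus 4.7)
The plan is to apply the standard product automaton construction to $A = ([0,m-1], \Sigma, \delta_A, 0, F_A)$ and $B = ([0,n-1], \Sigma, \delta_B, 0, F_B)$, obtaining an incomplete DFA for $\mathcal{L}(A) \cap \mathcal{L}(B)$ whose state set is contained in $[0,m-1] \times [0,n-1]$ with final states $F_A \times F_B$. Starting from the $mn$ potential product states, I would show that $2m + 2n - 6$ of them are either unreachable or collapse to a single already-counted state (or to the omitted dead state), leaving at most $mn - 2m - 2n + 6$ states.

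First, I would invoke the topological-order assumption spelled out in the Preliminaries: since $A$ and $B$ are minimal DFAs for finite languages, their transition graphs are DAGs, so state $0$ of either component can only be visited by the empty word. Hence the product states $(0, j)$ for $j \in [1, n-1]$ together with $(i, 0)$ for $i \in [1, m-1]$ form a disjoint family of $m + n - 2$ unreachable states.

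Next, I would use the fact that the pre-dead states $m-1$ of $A$ and $n-1$ of $B$ have no outgoing transitions, so any product state of the form $(m-1, j)$ or $(i, n-1)$ has no outgoing transitions in the product. Such a state recognizes either $\{\varepsilon\}$ (when it is final, i.e.\ lies in $F_A \times F_B$) or $\emptyset$ (otherwise). In the first case it is Nerode-equivalent to the pre-dead final state $(m-1, n-1)$ and can be merged with it; in the second it is equivalent to the dead state, which is omitted in the incomplete model. Restricting to states not already counted as unreachable, and excluding $(m-1, n-1)$ itself, gives the ranges $j \in [1, n-2]$ and $i \in [1, m-2]$, yielding a further $(n-2) + (m-2) = m + n - 4$ eliminable states, disjoint from the first family.

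Combining the two reductions, the number of states needed is at most $mn - (m+n-2) - (m+n-4) = mn - 2m - 2n + 6$, as required. The point that needs the most care is the book-keeping around the overlap cells $(0, n-1)$ and $(m-1, 0)$, which are already counted as unreachable and must therefore be excluded from the equivalence-based count; the index ranges above are chosen precisely to enforce this, and beyond this disjointness check no further subtlety arises since both arguments rely only on the DAG structure of the operands and on the uniqueness of the pre-dead state.
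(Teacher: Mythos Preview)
Your argument is correct and follows essentially the same approach as the paper: both proofs use the product construction, discard the states $(0,j)$ and $(i,0)$ as unreachable by the DAG structure of the operands, and then merge the border states $(m-1,j)$ and $(i,n-1)$ with either the pre-dead state $(m-1,n-1)$ or the (omitted) dead state. The only cosmetic difference is that the paper starts from the $(m+1)(n+1)$ grid including $\Omega_A$ and $\Omega_B$ and then argues those rows and columns collapse to the dead state, whereas you work directly on the $mn$ grid, which is slightly cleaner for intersection; your explicit disjointness check at $(0,n-1)$ and $(m-1,0)$ is exactly what the paper's index ranges encode implicitly.
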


\begin{proof}
Consider the DFA  accepting $\mathcal{L}(A)\cap
\mathcal{L}(B)$ obtained by the product construction.
For the same reasons as in Proposition~\ref{prop:unions}, we can eliminate the states of 
the form $(0,j)$, where $j\in[1,n-1]\cup\{\Omega_B\}$,
and of the form $(i,0)$,  where $i\in[1, m-1]\cup\{\Omega_A\}$; 
the states of the form $(m-1,j)$, where $j \in [1,n-2]$, and of
the form  $(i,n-1)$, where $i \in [1,m-2]$ are equivalent to the
state $(m-1,n-1)$ or to the state $(\Omega_A,\Omega_B)$; the states of the form 
$(\Omega_A,j)$, where $j \in [1,n-1]\cup\{\Omega_B\}$, and of the form 
$(i,\Omega_B)$, where $i\in [1,m-1]\cup\{\Omega_A\}$ are equivalent to the state $(\Omega_A,\Omega_B)$
which is the dead state of the DFA resulting fron the intersection,
and  thus can be removed. 
 Therefore, the number of states is 

$(m+1)(n+1)-3((m+1)(n+1))+12-1 = mn -2m -2n +6$.
\end{proof}

\begin{proposition}  
\label{prop:intert}
For any finite languages $L_1$ and $L_2$ with $isc(L_1)=m$ and
$isc(L_2)=n$, one has
\begin{equation*}
  \label{eq:it}
  \begin{split}
itc(L_1\cap L_2) \leq &\sum_{\substack{\tau \in
    \Sigma}}\left( s_\tau(L_1)s_\tau(L_2) +  (itc_\tau(L_1) -
  s_\tau(L_1)\   -\right. \\ & \left. a_\tau(L_1))
    (itc_\tau(L_2) - s_\tau(L_2)- a_\tau(L_2))  +  a_\tau(L_1) a_\tau(L_2)  \right).
  \end{split}
\end{equation*}
\end{proposition}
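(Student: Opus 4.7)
The plan is to follow the template of the proof of Proposition~\ref{prop:uniont}: start from the product-automaton construction used in Proposition~\ref{prop:statesint}, and count the surviving $\tau$-transitions symbol by symbol, classifying each one according to whether its source and target are the initial state, a ``middle'' state, or the merged final state. A $\tau$-transition of the product leads from $(i,j)$ to $(\delta_A(i,\tau),\delta_B(j,\tau))$ exactly when both components are defined, and after the reductions of Proposition~\ref{prop:statesint} the surviving states fall into three classes: the initial state $(0,0)$; the middle states $(i,j)$ with $i\in[1,m-2]$ and $j\in[1,n-2]$; and one merged final state representing every reachable pair in $F_A\times F_B$, which we may denote by $(m-1,n-1)$.

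I then partition the $\tau$-transitions of the reduced DFA by the types of their endpoints. A transition leaving $(0,0)$ on $\tau$ exists iff $s_\tau(L_1)=s_\tau(L_2)=1$, contributing exactly $s_\tau(L_1)s_\tau(L_2)$. For middle-to-middle transitions on $\tau$, a $\tau$-transition of $A$ is usable precisely when its source lies in $[1,m-2]$ \emph{and} its target avoids $F_A$, so its count is $itc_\tau(L_1)-s_\tau(L_1)-a_\tau(L_1)$ (removing the transition from $0$ and those entering $F_A$), and symmetrically for $B$; since the two components evolve independently, the product of these two quantities bounds the middle-to-middle count. Finally, a transition entering the merged final state on $\tau$ is obtained by pairing any $\tau$-transition of $A$ whose target is in $F_A$ with any $\tau$-transition of $B$ whose target is in $F_B$, and these are counted by $a_\tau(L_1)$ and $a_\tau(L_2)$ respectively, yielding the bound $a_\tau(L_1)a_\tau(L_2)$. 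Summing the three groups over $\tau\in\Sigma$ and using $itc(L)=\sum_\tau itc_\tau(L)$ produces the stated inequality.

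The main obstacle I expect is the bookkeeping of overlaps. A $\tau$-transition of $A$ from state $0$ whose target lies in $F_A$ is counted both by $s_\tau(L_1)$ and by $a_\tau(L_1)$, and similarly the source of a ``final-reaching'' transition may coincide with state $0$, so some product transitions (notably the one leaving $(0,0)$ when it happens to land in the merged final state) could be counted more than once. Since the proposition only asserts an upper bound, all such overlaps harmlessly inflate the right-hand side and the inequality is preserved.
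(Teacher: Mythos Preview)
Your overall plan---build the reduced product of Proposition~\ref{prop:statesint} and count the $\tau$-transitions symbol by symbol, as in Proposition~\ref{prop:uniont}---is exactly the paper's approach. The paper's proof is essentially one line: for intersection, a $\tau$-transition of the product exists only when both coordinates carry a $\tau$-transition, and the count splits as
\[
s_\tau(A)\,s_\tau(B)\;+\;t_\tau\bigl(A,[1,m-1]\setminus F_A\bigr)\,t_\tau\bigl(B,[1,n-1]\setminus F_B\bigr)\;+\;a_\tau(A)\,a_\tau(B).
\]

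However, your concrete three-way decomposition has a gap. You assert that the surviving states consist of $(0,0)$, the ``middle'' block $[1,m-2]\times[1,n-2]$, and \emph{one} merged final state ``representing every reachable pair in $F_A\times F_B$''. That is not what Proposition~\ref{prop:statesint} does: only pairs having a pre-dead coordinate (i.e.\ first coordinate $m-1$ or second coordinate $n-1$) are merged into $(m-1,n-1)$ or into the dead state. A pair $(i,j)$ with $i\in F_A\cap[1,m-2]$ and $j\in F_B\cap[1,n-2]$ stays in the middle block as a genuine, separate final state.

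This breaks your bookkeeping for Type~2. A middle-to-middle product transition only requires the $A$-target to be in $[1,m-2]$ (equivalently, $\neq m-1$), \emph{not} to avoid all of $F_A$. Concretely, take an $A$-transition $i\to i'$ with $i\in[1,m-2]$ and $i'\in F_A\setminus\{m-1\}$, and pair it with a $B$-transition $j\to j'$ with $j\in[1,n-2]$ and $j'\notin F_B$. The resulting product transition has source and target both in the middle block, so it must be counted; yet it is excluded from your Type~2 product $(itc_\tau-s_\tau-a_\tau)(\cdots)$ because $i'\in F_A$, it is not a Type~1 transition since the source is not $(0,0)$, and it is not a Type~3 transition since the target $(i',j')\notin F_A\times F_B$. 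So the three bounds you sum do not cover all surviving transitions, and the ``overcounting is harmless'' remark does not rescue this---here you are undercounting. The paper's split sidesteps this by classifying according to the \emph{source} pair (initial, versus a pair with both coordinates non-initial and non-final), with the last term $a_\tau(A)a_\tau(B)$ absorbing the transitions whose $A$- and $B$-components both land in final states.
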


\begin{proof}
  Using the same technique as in Proposition~\ref{prop:uniont}
  and considering that in the intersection we only have pairs of
  transitions where both elements are different from $-1$, the number of
  $\tau$-transitions is as follows, which proves the proposition,
$$s_\tau(A) s_\tau(B) + t_\tau(A, [1,m-1]\setminus F_A) t_\tau(B,
[1,n-1] \setminus F_B) + a_\tau(A) a_\tau(B).$$
%Therefore the proposition holds.
\end{proof}

The witness languages for the tightness of the bounds for this
operation are different from the families
given by Han and Salomaa
%in~\cite{han08:_state_compl_of_union_and} 
because those families are
not tight for the transition complexity.  For $m\geq 2$ and $n\geq 2$,
let $\Sigma = \{a_{ij}\mid i\in [1,m-2],\ j\in [1,n-2]\} \cup
\{a_{m-1,n-1}\}$.  Let $A=([0,m-1],\Sigma, \delta_A,0,\{m-1\})$
where $\delta_A(x, a_{ij})=x+i$ for $x\in[0,m-1],\ i\in [1,m-2]$, and
$j\in [1,n-2]$. Let $B=([0,n-1],\Sigma,
\delta_B,0,\{n-1\})$ where $\delta_B(x, a_{ij})=x+j$ for $x\in
[0,n-1],\  i\in [1,m-2],$ and $j\in [1,n-2]$.

\begin{theorem}
For any  integers $m\geq 2$ and $n\geq 2$ there exist an $m$-state 
DFA $A$  and an $n$-state 
DFA $B$, both accepting finite languages, such that any DFA
accepting $\mathcal{L}(A)\cap \mathcal{L}(B)$ needs at least
$mn-2(m+n)+6$ states and 
%$2k + (m-2)(n-2)\sum_{i=1}^{\min(m,n)-3}(m-2-i)(n-2-i)$ 
$(m-2)(n-2)(2+\sum_{i=1}^{\min(m,n)-3}(m-2-i)(n-2-i))+2$ 
transitions, if
the size of the alphabet can depend on $m$ and $n$.
\end{theorem}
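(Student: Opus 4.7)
My plan is to apply the product construction of Proposition~\ref{prop:statesint} to the witness pair $(A,B)$ and then identify exactly the reachable, non-dead, pairwise inequivalent states and the transitions between them. For the state bound I would first check reachability: $(0,0)$ is initial, every $(x,y)$ with $(x,y)\in[1,m-2]\times[1,n-2]$ is reached from $(0,0)$ by the single letter $a_{x,y}$, and $(m-1,n-1)$ is reached by $a_{m-1,n-1}$. Any candidate of the form $(m-1,y)$ with $y<n-1$ (or symmetrically $(x,n-1)$ with $x<m-1$) has no outgoing transition and is non-final, hence dead and discarded. Inequivalence is clean: the length-one word $a_{m-1-x,n-1-y}$ is accepted from $(x,y)$ and only from $(x,y)$ among the intermediate states; $a_{m-1,n-1}$ separates $(0,0)$ from the intermediates; and $\varepsilon$ isolates $(m-1,n-1)$. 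This yields the $(m-2)(n-2)+2$ pairwise inequivalent surviving states.

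For the transition bound I would enumerate live outgoing transitions state by state and match the total against the upper bound of Proposition~\ref{prop:intert}. From $(0,0)$ every letter of $\Sigma$ is defined and lands in a live state, contributing $(m-2)(n-2)+1$ transitions. From an intermediate $(x,y)$ a letter $a_{ij}$ produces a live transition precisely when either $(x+i,y+j)\in[1,m-2]\times[1,n-2]$ (intermediate-to-intermediate), or $(x+i,y+j)=(m-1,n-1)$ (the unique intermediate-to-final transition, obtained with $i=m-1-x$ and $j=n-1-y$); targets with exactly one coordinate saturated become dead states and are pruned. Summing these contributions across all intermediate sources and admissible increments yields a double sum together with a few easily identified boundary terms.

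The main obstacle is the algebraic consolidation: the intermediate-to-intermediate double sum must be folded into the closed form $(m-2)(n-2)\sum_{i=1}^{\min(m,n)-3}(m-2-i)(n-2-i)$ appearing in the statement, with the prefactor $(m-2)(n-2)$ also absorbing the $(0,0)$-to-intermediate and intermediate-to-final layers, and the additive $+2$ collecting the lone $a_{m-1,n-1}$ edge together with the residual boundary term that does not fit the generic pattern. The inner sum naturally truncates at $\min(m,n)-3$ because both coordinates must remain inside their intermediate ranges simultaneously. Once this bookkeeping is shown to saturate the upper bound of Proposition~\ref{prop:intert} applied to the very same $(A,B)$, both descriptional measures are tight at the same witness, which proves the theorem.
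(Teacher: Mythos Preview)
Your treatment of the state count is essentially the paper's: the paper exhibits the set of words $R=\{\varepsilon,\,a_{m-1,n-1}\}\cup\{a_{ij}\mid i\in[1,m-2],\,j\in[1,n-2]\}$ and argues they are pairwise inequivalent, which is exactly your reachability-plus-separating-words argument phrased through left quotients.

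For the transition count your plan diverges from the paper, and the divergence is where the gap lies. The paper does \emph{not} enumerate live edges in the minimized product. It instantiates the three-term bound from the proof of Proposition~\ref{prop:intert}: the $s_\tau(A)s_\tau(B)$ pairs (all $k=(m-2)(n-2)+1$ letters leave both initial states), the $t_\tau(A,[1,m-2])\,t_\tau(B,[1,n-2])$ pairs (leaving non-initial non-final states in \emph{each} operand), and the $a_\tau(A)a_\tau(B)$ pairs (reaching the unique final state in each operand). Summed over~$\Sigma$ these three pieces give $k$, $(m-2)(n-2)\sum_{i=1}^{\min(m,n)-3}(m-2-i)(n-2-i)$, and $k$, which is exactly the closed form. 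Crucially, the middle term counts all product transitions out of intermediate pairs, \emph{including} those landing on states such as $(m-1,y')$ with $y'<n-1$.

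You explicitly prune those edges because their targets are dead. With that pruning, the intermediate contribution becomes the separable double sum
\[
\sum_{x=1}^{m-2}\sum_{y=1}^{n-2}(m-2-x)(n-2-y)\;=\;\Bigl(\tfrac{(m-2)(m-3)}{2}\Bigr)\Bigl(\tfrac{(n-2)(n-3)}{2}\Bigr),
\]
which is \emph{not} $(m-2)(n-2)\sum_{i}(m-2-i)(n-2-i)$; already for $m=n=4$ one gets $1$ versus $4$. So the ``algebraic consolidation'' you flag as the main obstacle cannot be carried out from your starting point---the two expressions simply do not agree. If you want to match the stated formula you must count along the paper's three-term decomposition (pairs of operand transitions, not live edges in the minimized product) rather than prune targets that collapse to~$\Omega$; otherwise you arrive at a strictly smaller number and cannot conclude the claimed lower bound.
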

\begin{proof}
  For the number of states, following the proof \cite[Lemma
  6]{han08:_state_compl_of_union_and}, it is easy to see, that the
  words of the set $R= \{\varepsilon\}\cup \{a_{m-1}, a_{n-1}\}\cup
  \{a_{ij}\mid i\in [1, m-2],\text{ and } j\in [1,n-2] \}$ are all
  inequivalent under $\equiv_{L(A)\cap L(B)}$ and $|R|=mn-2(m+n)+6$.

In the DFA $A$, the number of $a_{ij}$-transitions is $(n-2)
\sum_{i=0}^{m-3}(m-1-i)+1$, and in the DFA $B$, that number is $(m-2)
\sum_{i=0}^{n-3}(n-1-i)+1$.  Let $k=(m-2)(n-2)+1$. The DFA resulting
from the intersection operation has: $k$ transitions corresponding to
the pairs of transitions leaving the initial states of the operands;
$(m-2)(n-2)\sum_{i=1}^{\min(m,n)-3}(m-2-i)(n-2-i)$ transitions
corresponding to the pairs of transitions  formed
by transitions leaving non-final and non-initial states of the
operands; and $k$ transitions corresponding to the pairs of
transitions leaving the final states of the operands.
\end{proof}

\section{Complement}
\label{sec:comp}
The state and transition complexity for this operation on finite
languages are similar to the ones on regular languages. This happens
because we need to complete the DFA.
\begin{proposition}  
For any $m$-state 
%incomplete 
DFA $A$, accepting a finite language, $m+1$ states are sufficient
for a DFA accepting $\mathcal{L}(A)^c$.
\end{proposition}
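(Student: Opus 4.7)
The plan is to construct an explicit DFA for $\mathcal{L}(A)^c$ with $m+1$ states by first completing $A$ and then swapping accepting and non-accepting states. Let $A=([0,m-1],\Sigma,\delta,0,F)$ be a minimal incomplete DFA for a finite language. I would define $A'=([0,m-1]\cup\{\Omega\},\Sigma,\delta',0,F)$ by setting $\delta'(i,\tau)=\delta(i,\tau)$ whenever $\delta(i,\tau)$ is defined, $\delta'(i,\tau)=\Omega$ whenever $\delta(i,\tau)$ is undefined, and $\delta'(\Omega,\tau)=\Omega$ for every $\tau\in\Sigma$. Then $A'$ is a complete DFA with $m+1$ states that accepts the same language as $A$, since the new state $\Omega$ is a non-final trap.

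Next I would take the complement in the standard way on the completed automaton, producing $A^c=([0,m-1]\cup\{\Omega\},\Sigma,\delta',0,([0,m-1]\cup\{\Omega\})\setminus F)$. Because $A'$ is deterministic and complete, a word $w$ is accepted by $A^c$ exactly when $\delta'(0,w)\notin F$, i.e.\ when $w\notin\mathcal{L}(A)$, so $\mathcal{L}(A^c)=\mathcal{L}(A)^c$. The state count is unchanged by the swap of final status, giving $m+1$ states, which proves the stated upper bound.

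I do not foresee a real obstacle: the only subtlety is arguing that completion is actually required (so we cannot hope for only $m$ states in the worst case). This is implicit in the fact that the pre-dead state $\pi$ of $A$ has no outgoing transitions in $A$, so at least one transition must be redirected to $\Omega$ when completing; after complementation $\Omega$ becomes final and reachable, hence cannot be discarded. This observation is, however, not needed for the sufficiency statement of the proposition and would only come into play when showing the matching lower bound.
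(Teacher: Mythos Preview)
Your proof is correct and matches the paper's approach: the paper does not spell out a formal proof for this proposition, but the surrounding text explains that ``we need to complete the DFA,'' which is exactly the completion-then-swap construction you carry out. Your extra remark about why the dead state cannot be discarded after complementation is accurate but, as you note, belongs to the tightness argument rather than the sufficiency claim.
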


\begin{proposition}  
\label{prop:comps}
For any finite languages $L_1$ with $isc(L)=m$ one has $itc(L^c)
\leq|\Sigma|(m+1)$.
\end{proposition}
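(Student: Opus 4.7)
My plan is to use the standard complementation construction: starting from the minimal incomplete DFA $A=([0,m-1],\Sigma,\delta,0,F)$ with $isc(L)=m$, I would first complete $A$ by adjoining a dead state $\Omega$ and, for every $(q,\tau)\in([0,m-1]\cup\{\Omega\})\times \Sigma$ with $\delta(q,\tau)$ undefined, setting $\delta(q,\tau)=\Omega$. This produces a complete DFA $A'$ on at most $m+1$ states. I would then swap final and non-final states, taking the new final set to be $([0,m-1]\cup\{\Omega\})\setminus F$, which by the usual argument for complete DFAs gives an automaton recognizing $L^c$.

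The transition count is then essentially immediate: a complete DFA on $m+1$ states over alphabet $\Sigma$ has exactly $|\Sigma|(m+1)$ transitions, since every state has one outgoing transition per symbol. Swapping the finality of states does not alter the transition structure, so the resulting DFA for $L^c$ still has $|\Sigma|(m+1)$ transitions. Since $itc(L^c)$ is the minimum number of transitions over all DFAs for $L^c$, this witness DFA yields $itc(L^c)\leq |\Sigma|(m+1)$.

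There is no real obstacle here; the only point worth highlighting (and likely what the preceding paragraph of the paper is alluding to) is why the bound does not improve when $L$ is finite. One might hope that the finiteness of $L$, which forces $L$ to have a pre-dead state $\pi$, could be exploited to drop some transitions in the complement. However, after swapping finality, $\pi$ becomes non-final but has all its completion-transitions going to $\Omega$, which has now become final and accepts $\Sigma^\star$ from itself (so it is not a dead state of $L^c$). Hence none of the $|\Sigma|$ transitions out of $\Omega$, nor those out of $\pi$, can be removed, and the bound $|\Sigma|(m+1)$ is the best our construction provides. If a sharper analysis is desired it would be enough to note that the case where $A$ is already complete contributes only $|\Sigma|m$ transitions, which is still bounded by $|\Sigma|(m+1)$.
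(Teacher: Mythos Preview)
Your proposal is correct and follows essentially the same approach as the paper: complete the minimal DFA for $L$ to $m+1$ states, swap final and non-final states, and observe that a complete DFA on $m+1$ states has exactly $|\Sigma|(m+1)$ transitions. The paper's proof is merely a terser version of your argument, relying implicitly on the preceding proposition for the $m+1$-state construction; your additional remarks about why finiteness does not help are sound but go beyond what is needed for the stated upper bound.
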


\begin{proof}
  The maximal number of $\tau$-transitions is $m+1$ because it is the
  number of states. Thus, the maximal number of transitions is
  $|\Sigma| (m+1)$.
\end{proof}

Gao \emph{et al.}~\cite{gao11:_trans_compl_of_incom_dfas} gave the
value $|\Sigma| (itc(L)+2)$ for the transition complexity of the
complement. In some situations, this bound is higher than the bound
here presented, but contrasting to that one, it gives the transition
complexity of the operation as function of the transition complexity
of the operands.

The witness family for this operation is exactly the same 
presented
% in \cite{gao11:_trans_compl_of_incom_dfas}
in the refered paper, i.e.  $\{b^m\}$,
for $m\geq 1$.

\section{Concatenation}
\label{sec:concatenation}

C{\^a}mpeanu \emph{et
  al.}~\cite{campeanu99:_state_compl_of_basic_operat_finit_languag}
studied the state complexity of the concatenation of a $m$-state
complete DFA with a $n$-state complete DFA over an alphabet of size $k$  and proposed the upper
bound

\begin{equation}
  \label{eq:catcamp}
   \sum\limits_{i=0}^{m-2} \min \left \{ k^i,
  \sum\limits_{j=0}^{f(A,i)} \binom{n-2}{j}\right \} +
\min \left \{ k^{m-1},\sum_{j=0}^{f(A)} \binom{n-2}{j}\right \}
\end{equation}

\noindent which was proved to be tight for $m>n-1$. It is easy to see
that  the second term of (\ref{eq:catcamp}) is $\displaystyle{\sum_{j=0}^{f(A)}
  \binom{n-2}{j}}$ if $m>n-1$, and $k^{m-1}$, otherwise. The value
$k^{m-1}$ indicates that the DFA resulting from the concatenation has
states with level at most $m-1$.  But that is not always the case, as
we can see by the example\footnote{
Note that we are omitting the dead state in the figures.
}   in Figure~\ref{fig:Resconcat}. This implies that
(\ref{eq:catcamp}) is not an upper bound if $m<n$. We have

\begin{proposition}
\label{prop:ccats}  
For any $m$-state complete DFA $A$ and any $n$-state complete DFA $B$,
both accepting finite languages over an alphabet of size $k$, the number of states sufficient
for a DFA accepting $\mathcal{L}(A)\mathcal{L}(B)$ is:
\begin{equation}
  \label{eq:catsmgno}
   \sum_{i=0}^{m-2} \min \left \{ k^i,
  \sum_{j=0}^{f(A,i)} \binom{n-2}{j}\right \} +
\sum_{j=0}^{f(A)} \binom{n-2}{j}
\end{equation}
\end{proposition}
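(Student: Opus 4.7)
The plan is to construct a DFA for $\mathcal{L}(A)\mathcal{L}(B)$ by the standard NFA-and-subset construction, to count reachable states in the style of~\cite{campeanu99:_state_compl_of_basic_operat_finit_languag}, and to handle separately the contribution of the dead state of $A$, which is where the published estimate breaks. First I would form an NFA by taking the disjoint union of $A$ and $B$, unmarking the states of $F_A$, and adding an $\varepsilon$-transition from each $f \in F_A$ to the initial state of $B$; subset-determinisation then yields DFA states of the form $(p, S)$, where $p = \delta_A(0, w)$ is the unique $A$-state reached by the input $w$ and $S = \{\delta_B(0, w_2) \mid w = w_1 w_2,\ \delta_A(0, w_1) \in F_A\}$ collects the $B$-states tracked via the corresponding suffixes. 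The states of $A$ are enumerated in a topological order consistent with the level function, with the dead state placed last as $p_{m-1}$.

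For each $i \in [0, m-2]$, two independent upper bounds on the number of $S$'s that can be paired with $p_i$ apply. The first, which reproduces the depth estimate of~\cite{campeanu99:_state_compl_of_basic_operat_finit_languag}, gives at most $k^i$ candidates. For the second, I would use the fact that, since $B$ accepts a finite language, its initial state has no incoming transitions in the minimal complete DFA (a cycle through it would force $\mathcal{L}(B)$ to be infinite); thus every non-empty suffix $w_2$ lands outside the initial state, and, discarding the useless dead state of $B$, the portion of $S$ inside the $n-2$ non-initial, non-dead $B$-states has size at most $f(A,i)$, yielding the bound $\sum_{j=0}^{f(A,i)}\binom{n-2}{j}$. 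Taking the minimum and summing over $i$ produces the first term of~(\ref{eq:catsmgno}).

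The second term accounts for the dead state $p_{m-1}$, and this is the point where the published formula~(\ref{eq:catcamp}) had to be corrected. As Figure~\ref{fig:Resconcat} illustrates, $p_{m-1}$ can be reached by words of arbitrary length while the $B$-component keeps advancing, so no bound of the shape $k^{m-1}$ is available for pairs $(p_{m-1}, S)$. Fortunately, $p_{m-1} \notin F_A$ forces the initial state of $B$ to be absent from $S$, while the size bound $|S \cap [1, n-2]| \leq f(A)$ still applies, giving $\sum_{j=0}^{f(A)}\binom{n-2}{j}$ reachable pairs of this shape (with $j=0$ counting the product dead state $S = \emptyset$). Summing with the previous contribution yields~(\ref{eq:catsmgno}). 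The key obstacle, and the only genuinely new ingredient compared with~\cite{campeanu99:_state_compl_of_basic_operat_finit_languag}, is the observation that the dead state of $A$ invalidates the depth estimate, so the corrected second term must rely exclusively on the subset-size argument driven by $f(A)$ together with the absence of incoming transitions into the initial state of $B$.
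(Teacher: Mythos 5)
Your proposal is correct and follows essentially the same route as the paper: the paper defers the detailed proof to its extended version, but its discussion of formula~(\ref{eq:catcamp}) and Figure~\ref{fig:Resconcat} identifies exactly the correction you make, namely that the pairs $(\Omega_A,S)$ can be reached by words longer than $m-1$, so the $k^{m-1}$ bound must be dropped and the dead-state contribution bounded solely by $\sum_{j=0}^{f(A)}\binom{n-2}{j}$ via the subset-size argument. Your per-state counting for $i\in[0,m-2]$ (the $k^i$ depth bound versus the $\sum_{j=0}^{f(A,i)}\binom{n-2}{j}$ bound from tokens launched at final states of $A$, using that $B$'s initial state has no in-transitions) is the same argument inherited from C{\^a}mpeanu \emph{et al.} that the paper relies on.
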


In the following, we present tight upper bounds for state and
transition complexity of concatenation for incomplete
DFAs. 

Given two incomplete DFAs $A=([0,m-1],\Sigma, \delta_A, 0, F_A)$ and
$B=([0,n-1], \Sigma, \delta_B,0, F_B)$, that represent finite
languages, the algorithm by Maia \emph{et al.} for the concatenation of regular
languages 
%in \cite{maia13:_incom_trans_compl_of_some_basic_operat} 
can
be applied to obtain a DFA $C=(R,\Sigma, \delta_C, r_0, F_C)$
accepting $\mathcal{L}(A)\mathcal{L}(B)$.  The set of states of $C$ is
contained in the set $([0,m-1]\cup\{\Omega_A\})\times 2^{[0,m-1]}$,
the initial state $r_0$ is $\langle 0,\emptyset\rangle$ if $0\notin
F_A$, and $\langle 0,\{0\}\rangle$ otherwise; $F_C = \{\langle
i,P\rangle \in R\ \mid \ P \cap F_B \neq \emptyset \}$, and for $\tau
\in \Sigma$, $\delta_C(\langle i, P\rangle, \tau) = \langle i', P'
\rangle$ with $i'=\delta_A(i,\tau)$, if $\delta_A(i,\tau)\downarrow$
or $i'=\Omega_A$ otherwise, and $P'=\delta_B(P,\tau) \cup \{0\}$ if $
i' \in F_A$ and $P'=\delta_B(P,\tau)$ otherwise.

The next result follows the lines of the one presented by C{\^a}mpeanu
\emph{et al.}, with the above referred corrections and omitting the
dead state.

\begin{proposition}
\label{prop:catsmgn}  
For any $m$-state 
%incomplete 
DFA $A$  and any $n$-state 
%incomplete 
DFA $B$, both accepting finite languages over an alphabet of size $k$, the number of states sufficient 
for a DFA accepting $\mathcal{L}(A)\mathcal{L}(B)$ is:
\begin{equation}
  \label{eq:catsmgn}
   \sum\limits_{i=0}^{m-1} \min \left \{ k^i,
  \sum\limits_{j=0}^{f(A,i)} \binom{n-1}{j}\right \} +
\sum\limits_{j=0}^{f(A)} \binom{n-1}{j} -1.
\end{equation}
\end{proposition}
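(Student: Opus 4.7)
The plan is to apply the product construction of Maia et al.\ recalled just before the proposition, yielding a DFA $C=(R,\Sigma,\delta_C,r_0,F_C)$ accepting $\mathcal{L}(A)\mathcal{L}(B)$ whose states have the form $\langle i,P\rangle$ with $i\in[0,m-1]\cup\{\Omega_A\}$ and $P\subseteq[0,n-1]$. I will bound the number of reachable states by grouping them according to the first component $i$, give two complementary upper bounds on the number of admissible $P$ for each $i\in[0,m-1]$, then handle the tail $i=\Omega_A$ separately, and finally subtract one for the sink of $C$, which is absent in the incomplete DFA.

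For fixed $i\in[0,m-1]$, the first bound, $k^i$, is the standard estimate counting the words that can reach state $i$ of $A$ (this uses the topological order on the acyclic $A$ and is inherited from the C\^ampeanu et al.\ argument). The second bound is combinatorial and rests on the structural invariant that, in the construction, $0$ is inserted into $P$ exactly when the $A$-component is final, and since $B$ is acyclic with $0$ as its initial state no state of $B$ can transition back to $0$. Consequently, at any reachable $\langle i,P\rangle$ one has $0\in P$ iff $i\in F_A$, and $|P|$ is at most the total number of $0$-insertions on the access path, namely $f(A,i)+1$ when $i\in F_A$ and $f(A,i)$ otherwise. Writing $P'=P\setminus\{0\}\subseteq[1,n-1]$, one always has $|P'|\leq f(A,i)$, giving at most $\sum_{j=0}^{f(A,i)}\binom{n-1}{j}$ choices for $P$; this is the correction over C\^ampeanu et al., whose $\binom{n-2}{j}$ reflected an additional sink of $B$ that is not present in the incomplete setting.

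For $i=\Omega_A$, no further $0$-insertion can ever occur, so $P\subseteq[1,n-1]$ and $|P|$ inherits the bound $|P|\leq f(A)$, yielding at most $\sum_{j=0}^{f(A)}\binom{n-1}{j}$ such states. Exactly one of them, $\langle\Omega_A,\emptyset\rangle$, is the sink of $C$ and is omitted in the incomplete DFA, which accounts for the $-1$. Summing the three contributions yields~(\ref{eq:catsmgn}). The step I expect to be the main obstacle is the rigorous verification of the structural invariant $0\in P\iff i\in F_A$ together with the size bound $|P\setminus\{0\}|\leq f(A,i)$, since it requires a careful induction on the length of the accessing word and uses the acyclicity of both operand automata; once the invariant is in place, the combinatorial count of subsets of $[1,n-1]$ and the comparison with $k^i$ are routine.
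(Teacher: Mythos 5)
Your proof is correct and follows essentially the route the paper intends: the paper itself only says the result "follows the lines" of C\^ampeanu \emph{et al.} with the stated corrections, and your argument is a faithful, correctly worked-out version of that — the invariant $0\in P\iff i\in F_A$ together with $|P\setminus\{0\}|\le f(A,i)$ (via the topological order on $A$ and the fact that the initial state of $B$ has no incoming transitions) gives the combinatorial bound $\sum_{j=0}^{f(A,i)}\binom{n-1}{j}$, the $k^i$ bound is the standard one inherited from the acyclic structure, and the $\Omega_A$-block plus removal of $\langle\Omega_A,\emptyset\rangle$ accounts for the last term and the $-1$. The only point worth tightening is the justification of the $k^i$ bound: the set of words reaching state $i$ is prefix-free (by acyclicity) and each such word has length at most $i$, which is what actually caps its cardinality at $k^i$.
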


\begin{proposition}\label{prop:cont1}  
For any finite languages $L_1$ and $L_2$ with $isc(L_1)=m$ and 
$isc(L_2)=n$ over an alphabet of size $k$, and making $\Delta_j=\binom{n-1}{j}
-\binom{\overline{t}_\tau(L_2)-\overline{s}_\tau(L_2)}{j}$, one has

\begin{multline}
itc(L_1L_2) \leq k\sum_{i=0}^{m-2} \min \left \{ k^i,
     \sum_{j=0}^{f(L_1,i)}\binom{n-1}{j} \right \} + \\
+ \sum_{\substack{\tau \in \Sigma}} 
 \left (
  \min  \left \{
  k^{m-1}-\overline{s}_\tau(L_2),\sum_{j=0}^{f(L_1)-1}\Delta_j\right \}+
  \sum_{j=0}^{f(L_1)}\Delta_j
 \right ).
\label{eq:cont1}
\end{multline}
\end{proposition}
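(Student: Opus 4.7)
The plan is to extend the state-counting argument of Proposition~\ref{prop:catsmgn} to a transition count, exploiting the additivity $itc(L_1L_2)=\sum_{\tau\in\Sigma}itc_\tau(L_1L_2)$. I would work with the DFA $C=(R,\Sigma,\delta_C,r_0,F_C)$ given just above Proposition~\ref{prop:catsmgn}, and partition its transitions by the first component $i$ of the source state $\langle i,P\rangle$ into three groups: $i\in[0,m-2]$, $i=m-1$, and $i=\Omega_A$.

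For sources with $i\in[0,m-2]$, a crude bound of at most $k$ outgoing transitions per state, combined with the part of the state count from Proposition~\ref{prop:catsmgn} restricted to these indices, yields the first term $k\sum_{i=0}^{m-2}\min\{k^i,\sum_{j=0}^{f(L_1,i)}\binom{n-1}{j}\}$. Here $k^i$ bounds the number of length-$i$ words leading to a given level-$i$ state of $A$, while $\sum_{j=0}^{f(L_1,i)}\binom{n-1}{j}$ bounds the number of distinct second components $P$ accumulated by the time that level is reached.

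For the remaining two groups I would count $\tau$-transitions symbol by symbol. Since $B$ accepts a finite language in topological order, $\delta_B$ never returns to $0$, so every reachable $\langle\Omega_A,P\rangle$ state has $P\subseteq[1,n-1]$ with $|P|\leq f(L_1)$, and a $\tau$-transition from such a source exists exactly when some element of $P$ has an outgoing $\tau$-transition in $B$. Counting size-$j$ subsets of $[1,n-1]$ meeting this condition gives precisely $\Delta_j=\binom{n-1}{j}-\binom{\overline{t}_\tau(L_2)-\overline{s}_\tau(L_2)}{j}$, since $\overline{t}_\tau(L_2)-\overline{s}_\tau(L_2)$ is the number of states in $[1,n-1]$ without a $\tau$-transition. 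Summing over $j$ yields the sub-term $\sum_{j=0}^{f(L_1)}\Delta_j$. For sources $\langle m-1,P\rangle$, since $m-1$ is the pre-dead (final) state of $A$, one has $0\in P$ always and $|P\setminus\{0\}|\leq f(L_1)-1$, so the analogous count gives $\sum_{j=0}^{f(L_1)-1}\Delta_j$. An alternative bound $k^{m-1}-\overline{s}_\tau(L_2)$ follows from path-counting in $A$: at most $k^{m-1}$ length-$(m-1)$ words reach state $m-1$, and one such path contributes no $\tau$-transition when $0$ lacks a $\tau$-transition in $B$. Taking the minimum of these two bounds provides the first sub-term.

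The main obstacle I anticipate is justifying the use of $\Delta_j$ rather than $\binom{n-1}{j}$ for the $\langle m-1,P\rangle$ sources when $\overline{s}_\tau(L_2)=0$. Although $0\in P$ naively suggests that a $\tau$-transition always exists via $0$ in this case, several $P$'s that differ only on the no-$\tau$-transition states of $[1,n-1]$ produce coincident images $\delta_B(P,\tau)$ and thus collapse to the same outgoing transition in $C$; making this identification precise and verifying that the effective count matches $\sum_{j=0}^{f(L_1)-1}\Delta_j$ uniformly across all symbols is the delicate part of the argument.
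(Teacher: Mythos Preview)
Your proposal is correct and follows essentially the same approach as the paper: the paper also partitions the $\tau$-transitions of $C$ into three groups according to the first component of the source, written there as $(i,\beta)$ for $i\in[0,m-2]$, $(-1,\beta)$ for $i=m-1=\pi_A$, and $(-2,\beta)$ for $i=\Omega_A$, and derives exactly the three sub-terms you describe. The only cosmetic difference is that the paper tags each transition by the pair $(\alpha,\beta)$ with $\beta$ the \emph{set of $B$-transitions used} rather than by the source set $P$, which is why $\Delta_j$ appears immediately (``exclude the non existing transitions from non initial states''); the subtle point you isolate in your last paragraph is treated just as tersely in the paper.
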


\begin{proof}
The $\tau$-transitions of the DFA $C$  accepting
$\mathcal{L}(A) \mathcal{L}(B)$ have three forms: $(i,\beta)$ where $i$ represents the
transition leaving the state $i\in [0,m-1]$; $(-1,\beta)$ where $-1$ represents
the absence of the transition from  state $\pi_A$ to $\Omega_A$;
and $(-2,\beta)$ where $-2$ represents any transition leaving
$\Omega_A$. In all forms, $\beta$ is a set of transitions of DFA $B$.
The number of transitions of the form $(i, \beta)$ is at most
$\sum_{i=0}^{m-2} \min \{ k^i, \sum_{j=0}^{f(L_1,i)} \binom{n-1}{j}\}$
which corresponds to the number of states of the form $(i,P)$, $i \in
[0,m-1]$ and $P\subseteq [0,n-1]$.  The number of transitions of the
form $(-1, \beta)$ is $\min \{
k^{m-1}-\overline{s}_\tau(L_2),\sum_{j=0}^{f(L_1)-1}\Delta_j\}$.
%\binom{n-1}{j} -
%\binom{\overline{t}_\tau(L_2)-\overline{i}_\tau(L_2)}{j} \}$.  
 The size of
$\beta$ is at most $f(L_1)-1$ and we need to exclude the non existing
transitions from non initial states.  On the other hand, we have at
most $k^{m-1}$ states in this level.  However, if 
%the initial state of DFA $B$ does not have a $\tau$-transition 
$s_\tau(B,0)=0$ we need to remove the
transition $(-1,\emptyset)$ which leaves the state $(m-1, \{0\})$.
The number of transitions of the form $(-2,\beta)$ is $
\sum_{j=0}^{f(L_1)}\Delta_j$
%\binom{n-1}{j}
%-\binom{\overline{t}_\tau(L_2)-\overline{i}_\tau(L_2)}{j} $ 
and this case is
similar to the previous one.
\end{proof}

%It is easy to see that the  worst-case  for both the state and the transition
%complexity of the concatenation  occurs when both   operand DFAs are linear.

To prove that that the bound is reachable we consider  two cases
depending whether $m+1\geq n$ or not.

\subsubsection{Case 1: $m+1\geq n$}
The witness languages are the ones presented by Câmpeanu \emph{et al.} (see
Figure~\ref{fig:Aconcat}). 

\begin{figure}[htb]
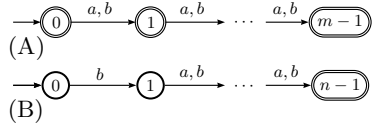

\centering
  \begin{tabular}[h]{lc}
{\small(A)} & {\centering
 \TinyPicture\VCDraw{
\begin{VCPicture}{(10,-2)(0,0)}
\FinalState[0]{(0,0)}{A} \FinalState[1]{(3,0)}{B} 
\FinalStateVar[m-1]{(9,0)}{D} 
\SetStateLineStyle{none}
\FinalStateVar[\cdots]{(6,0)}{C} 
\Initial{A} 
\EdgeL{A}{B}{a,b}
\EdgeL{B}{C}{a,b}
\EdgeL{C}{D}{a,b}
\end{VCPicture}
}}
\\
{\small(B)} & {\centering
\TinyPicture\VCDraw{
\begin{VCPicture}{(10,-2)(0,0)}
\State[0]{(0,0)}{A} \State[1]{(3,0)}{B} 
\FinalStateVar[n-1]{(9,0)}{D}
\SetStateLineStyle{none}
\FinalStateVar[\cdots]{(6,0)}{C} 
\Initial{A} 
\EdgeL{A}{B}{b}
\EdgeL{B}{C}{a,b}
\EdgeL{C}{D}{a,b}
\end{VCPicture}
}}
\end{tabular}
\caption{DFA $A$ with $m$ states and DFA $B$ with $n$ states.}
\label{fig:Aconcat}
\end{figure}

\begin{figure}[htb]
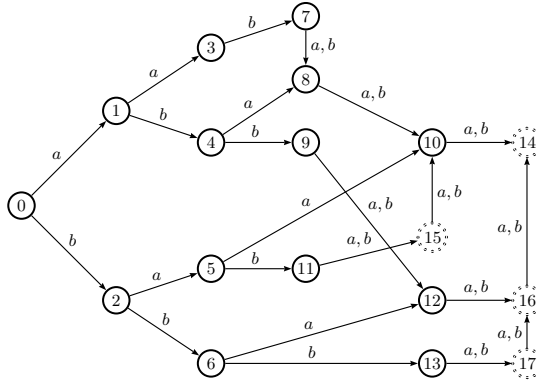

\centering
\TinyPicture\VCDraw{
\begin{VCPicture}{(16,-6)(0,6)}
\State[0]{(0,0)}{0} 
\State[1]{(3,3)}{1} 
\State[2]{(3,-3)}{2}
\State[3]{(6,5)}{3}  
\State[4]{(6,2)}{4}  
\State[5]{(6,-2)}{5}  
\State[6]{(6,-5)}{6}  
\State[7]{(9,6)}{7}  
\State[8]{(9,4)}{8}  
\State[9]{(9,2)}{9}  
\State[10]{(13,2)}{10}  
\State[11]{(9,-2)}{11}  
\State[12]{(13,-3)}{12}  
\State[13]{(13,-5)}{13}
\SetStateLineStyle{dotted}
\FinalState[14]{(16,2)}{14}    
\FinalState[15]{(13,-1)}{15}    
\FinalState[16]{(16,-3)}{16}    
\FinalState[17]{(16,-5)}{17}    
\Initial{A} 
\EdgeL{0}{1}{a}
\EdgeL{0}{2}{b}
\EdgeL{1}{3}{a}
\EdgeL{1}{4}{b}
\EdgeL{2}{5}{a}
\EdgeL{2}{6}{b}
\EdgeL{3}{7}{b}
\EdgeL{4}{8}{a}
\EdgeL{4}{9}{b}
\EdgeL{5}{10}{a}
\EdgeL{5}{11}{b}
\EdgeL{6}{12}{a}
\EdgeL{6}{13}{b}
\EdgeL{7}{8}{a,b}
%\VArcR{arcangle=15}{8}{10}{a,b}
\EdgeL{8}{10}{a,b}
%\VArcR{arcangle=-15}{9}{12}{a,b}
\EdgeL{9}{12}{a,b}
\EdgeL{10}{14}{a,b}
\EdgeL{11}{15}{a,b}
\EdgeL{12}{16}{a,b}
\EdgeL{13}{17}{a,b}
\EdgeR{15}{10}{a,b}
%\VArcR{arcangle=-30}{16}{14}{a,b}
\EdgeL{16}{14}{a,b}
\EdgeL{17}{16}{a,b}
\end{VCPicture}}
\caption{DFA resulting of the concatenation of DFA $A$ with $m=3$ 
and DFA $B$ with $n=5$, of Fig.~\ref{fig:Aconcat}. The states
with dashed lines have level $>3$ and are not accounted for by formula
(\ref{eq:catcamp}).}
\label{fig:Resconcat}
\end{figure}

\begin{theorem}
For any  integers $m\geq 2$ and $n\geq 2$ there exist an $m$-state 
DFA $A$  and an $n$-state DFA
$B$, both accepting finite languages, such that any DFA accepting
$\mathcal{L}(A)\mathcal{L}(B)$ needs at least 
$(m-n+3)2^{n-1}-2$ states
and $6\cdot 2^{n-1} -8$ transitions, if $m+1 \geq n$. 
\end{theorem}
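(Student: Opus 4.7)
The plan is to take the witness pair $(A,B)$ shown in Figure~\ref{fig:Aconcat} and apply the construction preceding Proposition~\ref{prop:catsmgn} to obtain a DFA $C$ accepting $\mathcal{L}(A)\mathcal{L}(B)$, then verify by direct enumeration that $C$ is minimal and realises the claimed transition count.

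First I would enumerate the reachable states of $C$ level by level. Because every state of $A$ is final, each transition of $C$ from a non-$\Omega_A$ level adds $0$ to the $B$-component, so a state $\langle i,P\rangle$ reached by a word $w$ of length $i\leq m-1$ satisfies $P=\{0\}\cup\{j\in[1,\min(i,n-1)]:w_{i-j+1}=b\}$. Varying the $b$-positions of $w$, every such $P$ is reachable, yielding $2^i$ states for $i\leq n-2$ and exactly $2^{n-1}$ states at each level $n-1\leq i\leq m-1$. At the $\Omega_A$ level the introduction of $0$ stops, and an analogous analysis shows that the reachable $P$'s are precisely the non-empty subsets of $[1,n-1]$, contributing another $2^{n-1}-1$ states. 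Summing gives $(m-n+3)2^{n-1}-2$. For pairwise non-equivalence I would distinguish states with different $A$-components by the length of the longest accepted word (from $\langle i,P\rangle$ with $P\ni 0$ this length is $m+n-2-i$, strictly decreasing in $i$), and states $\langle i,P\rangle$ and $\langle i,Q\rangle$ at the same level with $P\neq Q$ by picking $j$ in the symmetric difference $P\triangle Q$: since the $j$-th state of $B$ reaches the unique final state $n-1$ after exactly $n-1-j$ further letters, exactly one of the two states accepts a matching continuation.

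To count transitions, observe that every reachable non-$\Omega_A$ state has both an $a$- and a $b$-transition, and a transition is lost only when the destination set $\delta_B(P,\tau)$ is empty. A short case analysis identifies the only offenders: the $a$-transitions from $\langle m-1,\{0\}\rangle$ and $\langle m-1,\{0,n-1\}\rangle$, and both transitions from $\langle\Omega_A,\{n-1\}\rangle$. Summing the level contributions, in the case $m=n$ one obtains $(2^n-2)+(2^n-2)+(2^n-4)=6\cdot 2^{n-1}-8$ transitions; for $m>n$, each additional full level contributes $2^n$ further transitions, so the $\geq$ bound is only strengthened. The main obstacle I anticipate is the reachability-and-non-equivalence argument at the $\Omega_A$ level, where many input words of different lengths collapse onto the same $P$: one must show both that exactly the non-empty subsets of $[1,n-1]$ appear and that they are distinguished from each other and from the $\langle m-1,P\rangle$-states by a short word exploiting the linear structure of $B$.
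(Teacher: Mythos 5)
Your proposal is correct and follows essentially the same route as the paper: it takes the same C\^ampeanu-style witnesses of Figure~\ref{fig:Aconcat}, applies the concatenation construction, and tallies states and the (four) missing transitions level by level, arriving at the same per-category counts the paper lists; you merely supply the reachability and distinguishability details that the paper leaves implicit. Note that, exactly like the paper's own proof (which also begins ``Consider $m\geq n$''), you verify the transition bound only for $m\geq n$ and leave the boundary case $m+1=n$ of the statement unexamined.
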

\begin{proof}
  The proof for the number of states corresponds to the one presented 
by C{\^a}mpeanu
\emph{et al.}.
%in  ~\cite[Theorem
%4]{campeanu99:_state_compl_of_basic_operat_finit_languag}. 
The DFA
  $A$ has $m-1$ $\tau$-transitions for $\tau\in\{a,b\}$ and $f(A)=m$.
  The DFA $B$ has $n-2$ $a$-transitions and $n-1$ $b$-transitions.
  Consider $m \geq n$. If we analyse the transitions as we did in the
  proof of the Proposition~\ref{prop:cont1} we have: $2^{n-1}-1$
  $a$-transitions and $2^{n-1}-1$ $b$-transitions that correspond to
  the transitions of the form $(i, \beta)$; $2^{n-1}-2$
  $a$-transitions and $2^{n-1}-1$ $b$-transitions that correspond to
  the transitions of the form $(-1,\beta)$; and $2^{n-1}-2$
  $a$-transitions and $2^{n-1}-1$ $b$-transitions that correspond to
  the transitions of the form $(-2,\beta)$. Adding up those values we
  have the result.
\end{proof}

\subsubsection{Case 2: $m+1 < n$}
Let $\Sigma= \{b\} \cup \{a_i \mid i\in [1,n-2]\}$. 
Let $A=([0,m-1], \Sigma, \delta_A, 0, [0,m-1])$ where
$\delta_A(i, \tau)=i+1$, for any $\tau \in \Sigma$.
Let $B=([0,n-1], \Sigma, \delta_B, 0, \{n-1\})$ where
$\delta_B(i, b)=i+1$, for $i\in [0,n-2]$, 
$\delta_B(i, a_j)=i+j$, for $i,j\in [1,n-2]$, $i+j\in [2,n-1]$, and
$\delta_B(0, a_j)=j$, for $j\in [2,n-2]$.

\begin{theorem}
For any  integers $m\geq 2$ and $n\geq 2$ there exist an $m$-state 
DFA $A$  and an $n$-state DFA
$B$, both accepting finite languages, such that the number of
states and transitions of any DFA accepting
$\mathcal{L}(A)\mathcal{L}(B)$ reaches the upper bounds, if $m+1 <n $
and the size of the alphabet can depend of $m$ and $n$. 
\end{theorem}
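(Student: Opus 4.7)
The plan is to instantiate the state bound of Proposition~\ref{prop:catsmgn} and the transition bound of Proposition~\ref{prop:cont1} on the given witnesses $A,B$, and then show that the construction of Maia \emph{et al.} applied to $(A,B)$ produces a DFA realising both bounds. Because every state of $A$ is final, $f(A,i)=i$ and $f(A)=m$, and because $|\Sigma|=n-1$, the expression (\ref{eq:catsmgn}) becomes a concrete closed form in $m$ and $n$; the values $s_\tau(L_2)$, $\overline{s}_\tau(L_2)$, and $t_\tau(L_2)$ needed to specialise (\ref{eq:cont1}) can be read off directly from the definition of $B$.

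The bulk of the argument is reachability. Let $C=(R,\Sigma,\delta_C,r_0,F_C)$ be the constructed DFA. Because every state of $A$ is final, after reading any word of length $i\leq m-1$ the first component equals $i$ and the second component $P$ is a subset of $[0,n-1]$ with $0\in P$ and $|P|\leq i+1$. I would prove by induction on $i\in[1,m-1]$ that every such $P$ is reached at level $i$ whenever $\sum_{j=0}^{i}\binom{n-1}{j}$ is the binding term of the minimum; when $(n-1)^i$ is the smaller term, every one of the $(n-1)^i$ words of length $i$ already yields a distinct second component by the construction of $B$. The inductive step exploits the combinatorial flexibility of $B$: from $0$ one jumps to any $j\in[1,n-2]$ via $b$ or $a_j$, and from $s\in[1,n-2]$ the letter $a_j$ shifts $s$ to $s+j$ provided $s+j\leq n-1$. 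Given a target $P=\{0\}\cup\{p_1<\cdots<p_j\}$ at level $i$, I would lift an already-reachable subset at level $i-1$ by an appropriate letter, placing the prescribed elements of $P$ one at a time while the forced insertion of $0$ replenishes the base.

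Distinguishability follows from the fact that the unique final state $n-1$ of $B$ is reachable from every $p\in[0,n-1]$ via the word $b^{n-1-p}$, and that the first component of $C$ can be driven past $F_A$ into $\Omega_A$ by reading additional letters; using these, a suitable suffix built around $\max(P\triangle P')$ separates any two distinct reachable states. Combining reachability and distinguishability yields the state bound. The transition bound is then obtained by applying Proposition~\ref{prop:cont1} with the specialised parameters and verifying, as in the case $m+1\geq n$, that each of the three classes of transitions $(i,\beta)$, $(-1,\beta)$, $(-2,\beta)$ saturates its share of (\ref{eq:cont1}).

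The main obstacle is the reachability claim: even though $B$ was designed to be combinatorially flexible, realising an arbitrary eligible subset $P$ at a prescribed level requires an explicit scheme that reconciles the shift action of the letters $a_j$ with the forced re-insertion of $0$ after every step. Once reachability is settled, distinguishability and the transition bookkeeping are comparatively routine.
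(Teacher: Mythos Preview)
Your proposal is correct and follows the same witness-based approach as the paper: take the automata $A,B$ defined just before the theorem, feed them into the concatenation construction, and verify that the state and transition counts meet the bounds of Propositions~\ref{prop:catsmgn} and~\ref{prop:cont1}. The paper's own proof is far terser than yours: it merely records the $\tau$-transition counts of the two witnesses (namely $m-1$ per symbol in $A$; and in $B$, $n-1$ $b$-transitions, $n-2$ $a_1$-transitions, and $n-i$ $a_i$-transitions for $i\in[2,n-2]$) and then declares the remainder ``similar to the proof of Proposition~\ref{prop:cont1}''. In particular, the paper gives no explicit reachability or distinguishability argument; your inductive scheme for producing every eligible second component $P$ at level $i$, and your separating suffixes via $b^{\,n-1-p}$, are exactly the details one would have to supply to make the paper's sketch into a proof. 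One small point of care: your claim that ``from $0$ one jumps to any $j\in[1,n-2]$ via $b$ or $a_j$'' is slightly off, since $a_1$ is undefined at state $0$ of $B$ --- the move to $1$ is via $b$ only, and reading $a_1$ from $\{0\}$ yields the second component $\{0\}$ itself; this does not break the argument but should be reflected in the case analysis at level~$1$.
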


\begin{proof}
  The number of $\tau$-transitions of DFA $A$ is $m-1$, for $\tau \in
  \Sigma$. The DFA $B$ has $n-1$ $b$-transitions, $n-2$
  $a_1$-transitions, and $n-i$ $a_i$-transitions, with $i \in
  [2,n-2]$. The proof is similar
  to the proof of Proposition~\ref{prop:cont1}.
\end{proof}

\begin{proposition}
The upper bounds for state and transition complexity of
concatenation cannot be reached with a fixed alphabet for $m\geq 0$,
$n> m+1$.
\end{proposition}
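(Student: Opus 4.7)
The plan is to exploit the $\min$ operators in~(\ref{eq:catsmgn}) and~(\ref{eq:cont1}) and compare the formulas' values at a fixed $k$ with the growing-alphabet values attained in the Case~2 theorem. First I would observe that both formulas are maximised, over all choices of the left operand $A$, by taking $A$ with every one of its $m$ states final, so that $f(A)=m$ and $f(A,i)=i$; any $A$ with fewer final states gives a strictly smaller bound and is easier to attain. Hence it suffices to obstruct the bound for this worst-case $A$.

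With this $A$ fixed, at the level $i=1$ the binomial side of the state-bound $\min$ becomes $\binom{n-1}{0}+\binom{n-1}{1}=n$, while the $k^i$ side is just $k$. For any $k$ independent of $m,n$, once $n>k$ the $\min$ collapses to $k$, so the formula loses at least $n-1-k$ units compared with the value $n-1$ that the Case~2 witness with $|\Sigma|=n-1$ produces at that level. Combined with the unchanged contribution of the second sum $\sum_{j=0}^{m}\binom{n-1}{j}$, this forces the fixed-$k$ formula to stay strictly below the Case~2 value whenever $n>\max(m+1,k)$, and the gap grows linearly in $n$. Since the Case~2 theorem actually attains its bound, no constant alphabet can match that state count.

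The transition bound~(\ref{eq:cont1}) contains the very same $\min\{k^i,\sum_{j=0}^{f(L_1,i)}\binom{n-1}{j}\}$ expressions, scaled by $k$ in the leading sum and recurring inside the $\tau$-sum through $\min\{k^{m-1}-\overline{s}_\tau(L_2),\sum_{j=0}^{f(L_1)-1}\Delta_j\}$, so the collapse at $i=1$ produces an analogous strict gap between the fixed-$k$ value and the value attained by the Case~2 witness, and the argument transfers directly. The only obstacle I anticipate is the routine bookkeeping needed to verify that, for the worst-case $A$ and the Case~2 witness $B$, the $\min$ operators of both bounds align level by level so that the single inequality $k<n$ at $i=1$ suffices; no further delicate arguments are needed once this collapse is isolated.
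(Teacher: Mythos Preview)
Your plan is sound and would yield a correct proof (modulo a harmless off-by-one: the strict gap at level $i=1$ requires $n>k+1$, not $n>k$, since for $k=n-1$ the two $\min$ values coincide). However, your route is genuinely different from the paper's.

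The paper argues structurally inside the product automaton $C$: it singles out the set $S=\{(\Omega_A,P)\mid 1\in P\}$, observes that every such state must be reached in one step from some $(i,P')$ with $i\in F_A$ and $0\in P'$, notes that the maximal possible size of $S$ is $\sum_{j=0}^{f(A)-1}\binom{n-2}{j}$, and then bounds the number of words that can reach $S$ from the initial state by $\sum_{i=0}^{f(A)}k^{i+1}$. Since $f(A)\le m$, this word count is fixed once $k$ and $m$ are, while the required $|S|$ grows with $n$; hence for $n$ large enough the bound cannot be met. Your argument, by contrast, never looks inside $C$: you exploit that the upper-bound formulas~(\ref{eq:catsmgn}) and~(\ref{eq:cont1}) are monotone in $k$ and in the final-state counts $f(A,i)$, maximise over $A$ by taking all states final, and then compare the fixed-$k$ value of the formula with the $k=n-1$ value actually attained by the Case~2 witness, isolating the collapse at level~$1$. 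Your approach is cleaner and essentially a one-line monotonicity observation once~(\ref{eq:catsmgn}) and the Case~2 tightness theorem are in hand, but it depends on that tightness result; the paper's reachability count is more explicit about \emph{which} states are missing and does not invoke the Case~2 witness.
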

\begin{proof}
Let $S=\{(\Omega_A,P)\mid 1\in P\}\subseteq R$. 
  A state   $(\Omega_A,P)\in S$  has to
  satisfy the following condition: $\exists i\in F_A \exists
  P'\subseteq 2^{[0,n-1]}$ with  $0\in P'$ and $\exists \tau\in \Sigma$,
    such that $\delta_C((i,P'),\tau)=(\Omega_A,P)$. The  maximal
    size of $S$ is  $\sum_{j=0}^{f(A)-1} \binom{n-2}{j}$. Assume that $\Sigma$ has a fixed size $k=|\Sigma|$. Then, the maximal
    number    of words that reaches states of $S$ from $r_0$ is
    $\sum_{i=0}^{f(A)}k^{i+1}$. 
It is easy to see that for $n>m$
    sufficiently large $\sum_{i=0}^{f(A)}k^{i+1}\ll \sum_{j=0}^{f(A)-1} \binom{n-2}{j}$.
\end{proof}

\section{Star}
\label{sec:star}

Given an incomplete DFA $A=([0,m-1],\Sigma, \delta_A, 0, F_A)$
accepting a finite language,  
a DFA $B$ accepting $\mathcal{L}(A)^\star$ can be constructed by an
algorithm similar to the one for regular languages \cite{maia13:_incom_trans_compl_of_some_basic_operat}.
Let $B=(Q_B,\Sigma,\delta_B,\{0\},F_B)$ where $Q_B\subseteq
2^{[0,m-1]}$, $F_B=\{P\in Q_B\mid P\cap F_A\not=\emptyset\}\cup\{0\}$,
and for $\tau\in \Sigma$,  $P\subseteq Q_B$, and
$R=\delta_A(P,\tau)$, $\delta_B(P,\tau)$ is $R$ if $R\cap
F_A=\emptyset$, $R\cup \{0\}$ otherwise. 

If $f(A)=1$ then the minimal DFA accepting $\mathcal{L}(A)^\star$ has
also $m$ states. Thus, in the following we will consider DFAs with at
least two final states.

\begin{proposition}  
\label{prop:stars}
For any $m$-state
%incomplete 
DFA $A$ accepting a finite language with $f(A)\geq 2$, $2^{m-f(A)-1}+2^{m-2}-1$
states are sufficient for a DFA accepting $\mathcal{L}(A)^\star$.
\end{proposition}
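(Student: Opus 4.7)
The plan is to count the reachable states of the DFA $B$ produced by the subset construction for $\mathcal{L}(A)^\star$, and then collapse a few obvious equivalences to arrive at the claimed bound.

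First, I would establish a reachability invariant: every reachable state $P\neq\{0\}$ of $B$ satisfies $0\in P \iff P\cap F_A\neq\emptyset$. Since $A$ is acyclic (finite language) with initial state $0$, no transition of $A$ ever maps into $0$, so $0$ can enter a subset only through the $\{0\}$-augmentation step in the construction, and by definition that step fires precisely when the $\delta_A$-image meets $F_A$. A routine induction on the length of the input word then yields the invariant.

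Next I partition the subsets satisfying the invariant (together with the initial state) according to three binary features --- whether they contain $0$, whether they contain the pre-dead state $\pi$, and, when they contain $\pi$, whether they also contain some other final state. This yields five categories: (I) the singleton $\{0\}$; (II) nonempty $P\subseteq[1,m-1]\setminus F_A$; (III) $0\in P$, $\pi\notin P$, $P\cap F_A\neq\emptyset$; (IV) $\{0,\pi\}\subseteq P$ and $P\cap(F_A\setminus\{\pi\})\neq\emptyset$; (V) $\{0,\pi\}\subseteq P$ and $P\cap F_A=\{\pi\}$. Assuming $0\notin F_A$ (the typical case, consistent with the formula's integrality), straightforward binomial arithmetic gives sizes $1$, $2^{m-f-1}-1$, $2^{m-2}-2^{m-f-1}$, $2^{m-2}-2^{m-f-1}$, and $2^{m-f-1}$ respectively, summing to $2^{m-1}$.

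The heart of the argument is to identify equivalent pairs. Because $\pi$ has no outgoing transition in $A$, we have $\delta_A(P,\tau)=\delta_A(P\setminus\{\pi\},\tau)$ and hence $\delta_B(P,\tau)=\delta_B(P\setminus\{\pi\},\tau)$ for every $P$ and $\tau$. Every state $P$ in category IV is final, and $P\setminus\{\pi\}$ lies in category III and is also final (since $P$ contains another final element), so the two are equivalent and may be merged. Analogously, the single state $\{0,\pi\}$ of category V has the same transitions and finality as the initial state $\{0\}$, so it may be identified with it. After performing these identifications, the surviving counts add up to
$$1+(2^{m-f-1}-1)+(2^{m-2}-2^{m-f-1})+(2^{m-f-1}-1)\;=\;2^{m-f-1}+2^{m-2}-1,$$
yielding the claimed bound.

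The main obstacle is verifying the merges rigorously: one must check that the $\{0\}$-augmentation in $\delta_B$ is triggered by the same input symbols in the two merged states (which follows once the underlying $\delta_A$-images coincide) and that removing $\pi$ does not change finality. This last point is exactly where the hypothesis $f(A)\ge 2$ is used --- in category IV, $P$ is guaranteed to carry a final element distinct from $\pi$, so $P\setminus\{\pi\}$ remains final. Once these verifications are in place, the remaining bookkeeping is elementary.
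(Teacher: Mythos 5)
Your argument is correct and is essentially the proof the paper points to (the subset-construction count of C\^ampeanu \emph{et al.}): the reachability invariant, the five-way case split, the bijective merging of each set in category IV with its $\pi$-free companion in category III, and the identification of $\{0,\pi\}$ with $\{0\}$ all check out, and the arithmetic yields exactly $2^{m-f(A)-1}+2^{m-2}-1$. One remark on your standing assumption $0\notin F_A$: it is not merely ``the typical case'' but is genuinely required for the stated formula, since if the initial state could be final one could add $\varepsilon$ to the paper's own star witness (raising $f(A)$ from $2$ to $3$ without changing $\mathcal{L}(A)^\star$, which still needs $2^{m-2}+2^{m-3}-1$ states) and thereby exceed the bound $2^{m-f(A)-1}+2^{m-2}-1$; so the proposition must be read, as in C\^ampeanu \emph{et al.}, with $f(A)$ counting only non-initial final states, which is exactly what your proof establishes.
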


\begin{proof}
The proof is similar to the proof presented by C{\^a}mpeanu
\emph{et al.}.
%of \cite[Theorem 1]{campeanu99:_state_compl_of_basic_operat_finit_languag}.
\end{proof}

\begin{proposition}  
\label{prop:transtar}
For any finite language $L$ with $isc(L)=m$ one has
% \begin{eqnarray*}
% itc(L^\star) &\leq&
% \sum_{\tau \in X} 2^{m-f(L)-1}(1+2^{e_\tau(L)})- 2^{\overline{t}_\tau(L)- \overline{i}_\tau(L)-\overline{f}_\tau(L)}+\\ 
% &&\sum_{\tau \in \Sigma\setminus X} 2^{m-f(L)-1}(1+2^{e_\tau(L)})-2^{\overline{t}_\tau(L)- \overline{i}_\tau(L)-\overline{f}_\tau(L)+1}
% \end{eqnarray*}
$$ itc(L^\star) \leq 2^{m-f(L)-1}\left (k +\sum_{\tau
  \in\Sigma}2^{e_\tau(L)}\right)- \sum_{\tau \in \Sigma} 2^{n_\tau}-  \sum_{\tau \in X} 2^{n_\tau}$$
\noindent where $n_\tau=\overline{t}_\tau(L)- \overline{s}_\tau(L)-\overline{e}_\tau(L)$ and $X=\{\tau\in \Sigma\mid s_\tau(L)=0\}$.

\end{proposition}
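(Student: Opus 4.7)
The plan is to use the identity $itc(L^\star) = \sum_{\tau \in \Sigma} itc_\tau(L^\star)$ and bound each $itc_\tau(L^\star)$ by counting the reachable states $P \in Q_B$ out of which a $\tau$-transition leaves, namely those $P$ containing at least one $i$ with $t_\tau(A,i)=1$. Equivalently, I will compute $|Q_B| - |\{P \in Q_B : P \subseteq M_\tau\}|$, where $M_\tau = \{i : t_\tau(A,i)=0\}$, summed over $\tau$.

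The key structural ingredient, inherited from the proof of Proposition~\ref{prop:stars}, is the partition of $Q_B$ into two classes: (a) the \emph{non-final} states of $B$, those $P$ with $P\cap F_A = \emptyset$, and (b) the \emph{post-final} states, those with $0 \in P$ (which is added whenever a final state is traversed). Because $A$ accepts a finite language, state $0$ has no incoming transitions in $A$; hence reachable non-final states (other than the initial $\{0\}$) are subsets of $[1,m-1]\setminus F_A$, giving at most $2^{m-f-1}$ such states, while the post-final states have the form $\{0\}\cup S$ with $S\subseteq [1,m-1]$.

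For each letter $\tau$, I then carry out two counts. For class (a), a non-final state lacks a $\tau$-transition iff it is a subset of the non-initial, non-final states of $A$ without a $\tau$-transition, of which there are exactly $n_\tau = \overline{t}_\tau - \overline{s}_\tau - \overline{e}_\tau$. Hence class (a) contributes at most $2^{m-f-1} - 2^{n_\tau}$ $\tau$-transitions; summed over $\tau$ this produces the $k\cdot 2^{m-f-1} - \sum_\tau 2^{n_\tau}$ portion of the bound. For class (b), when $s_\tau = 1$ every state automatically has a $\tau$-transition through $0$, so the whole class contributes; I will argue this count is at most $2^{m-f-1+e_\tau}$, where $(m-f-1)$ enumerates the independent choices on the non-initial, non-final positions and $e_\tau$ enumerates the independent choices over those final states that still admit a $\tau$-transition (the other final positions cannot be part of a reachable post-final state admitting a $\tau$-transition). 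When $s_\tau = 0$, i.e.\ $\tau \in X$, I additionally subtract $2^{n_\tau}$ to account for post-final states $\{0\}\cup S$ whose $S$-part sits inside the $\tau$-transitionless, non-initial, non-final positions. Collecting the two classes across all $\tau$ gives the displayed inequality.

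The main obstacle is the class (b) count: making precise the bound $2^{m-f-1+e_\tau}$ for the post-final states of $B$ that admit a $\tau$-transition. This requires a careful characterization of which subsets $\{0\}\cup S$ are actually reachable in $B$ under the rule that $0$ is re-inserted on every traversal of $F_A$, and then isolating the dependence on the subset of final states that still carry a $\tau$-transition. The DAG structure of the finite-language DFA $A$ (states topologically ordered, pre-dead state $\pi$, no back-edges into $0$) is what makes this independence argument work, and it is where the exponent $e_\tau$ enters naturally; the rest of the derivation is a bookkeeping sum.
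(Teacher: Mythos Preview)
The paper's own proof is a single sentence (``similar to the one for the states''), so you are supplying details the authors omit. Your overall scheme --- split the reachable states of the star automaton $B$ into the ``non-final'' class $P\subseteq[1,m-1]\setminus F_A$ and the ``post-final'' class $\{0\}\cup S$, and for each letter $\tau$ count how many states in each class carry a $\tau$-transition --- is exactly the adaptation the paper has in mind, and your class~(a) computation $2^{m-f-1}-2^{n_\tau}$ is correct.

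The gap is in the class~(b) justification. Your key sentence, ``the other final positions cannot be part of a reachable post-final state admitting a $\tau$-transition'', is a reachability claim and it is false. Take $m=5$, $F_A=\{2,3,4\}$, $\pi=4$, with $\delta(0,a)=1$, $\delta(0,b)=2$, $\delta(1,a)=2$, $\delta(1,b)=3$, $\delta(2,a)=3$, $\delta(3,a)=4$. Then $\{0,1,3\}$ is a reachable post-final state (it is $\delta_B(\{0,2\},a)$), it has a $b$-transition (through $0$ and $1$), yet it contains the final state $3$ which has no $b$-transition. So ``independent choices over final states with a $\tau$-transition'' does not describe the reachable post-final states with a $\tau$-transition.

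What actually yields the exponent $m-f-1+e_\tau=(m-1)-\overline{e}_\tau$ is an \emph{equivalence} (merging) argument, not a reachability one: two post-final states $\{0\}\cup S$ and $\{0\}\cup S'$ that agree on $[1,m-1]\setminus\overline{E}_\tau$ (where $\overline{E}_\tau$ is the set of final states without $\tau$-transition) have the same $\tau$-image, and one then has to argue that in the state-merging performed in the proof of Proposition~\ref{prop:stars} (which already quotients by the pre-dead state $\pi\in\overline{E}_\tau$) the remaining $\overline{e}_\tau-1$ positions of $\overline{E}_\tau$ can likewise be factored out when counting $\tau$-transitions. Your last paragraph gestures at ``independence'' coming from the DAG structure, but the concrete mechanism you name is the wrong one; you should recast the class~(b) bound as a quotient/equivalence count rather than a reachable-subset count.
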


\begin{proof}
The proof is similar to the one for the states. 
\end{proof}
The witness  family for this operation is the same as the one presented
by Câmpeanu \emph{et al.}, but we have to exclude dead state
 (see Figure~\ref{fig:Astar}).  

\begin{figure}[htb]
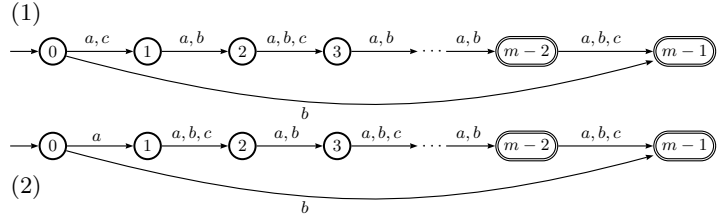

\centering
  \begin{tabular}[h]{lc}
{\small(1)} & {\centering
 \TinyPicture\VCDraw{
\begin{VCPicture}{(0,0)(20,2)}
\State[0]{(0,0)}{A} \State[1]{(3,0)}{B} \State[2]{(6,0)}{C}
\State[3]{(9,0)}{E} 
\FinalStateVar[m-2]{(15,0)}{F} 
\FinalStateVar[m-1]{(20,0)}{G} 
\SetStateLineStyle{none}
\State[\cdots]{(12,0)}{D} 
\Initial{A} 
\EdgeL{A}{B}{a,c}
\VArcR{arcangle=-15}{A}{G}{b}
\EdgeL{B}{C}{a,b}
\EdgeL{C}{E}{a,b,c}
\EdgeL{E}{D}{a,b}
\EdgeL{D}{F}{a,b}
\EdgeL{F}{G}{a,b,c}
\end{VCPicture}
}}
\\
{\small(2)} & {\centering
 \TinyPicture\VCDraw{
\begin{VCPicture}{(20,3)(0,0)}
\State[0]{(0,0)}{A} \State[1]{(3,0)}{B} \State[2]{(6,0)}{C}
\State[3]{(9,0)}{E} 
\FinalStateVar[m-2]{(15,0)}{F} 
\FinalStateVar[m-1]{(20,0)}{G} 
\SetStateLineStyle{none}
\State[\cdots]{(12,0)}{D} 
\Initial{A} 
\EdgeL{A}{B}{a}
\VArcR{arcangle=-15}{A}{G}{b}
\EdgeL{B}{C}{a,b,c}
\EdgeL{C}{E}{a,b}
\EdgeL{E}{D}{a,b,c}
\EdgeL{D}{F}{a,b}
\EdgeL{F}{G}{a,b,c}
\end{VCPicture}
}}
\end{tabular}
\caption{DFA $A$ with $m$ states, with $m$ even (1) and odd (2).}
\label{fig:Astar}
\end{figure}

%\vspace{-0.3cm}
\begin{theorem}
\label{teo:starw}
For any  integer $m\geq 4$ there exists an $m$-state 
DFA $A$
accepting a finite language, such that any DFA accepting
$\mathcal{L}(A)^\star$ 
needs at least $2^{m-2}+2^{m-3}-1$ states and 
$9\cdot 2^{m-3}-2^{m/2}-2$ transitions if $m$ is odd or
$9\cdot 2^{m-3}-2^{(m-2)/2}-2$ transitions otherwise. 
\end{theorem}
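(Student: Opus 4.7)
The plan is to apply the subset-style construction sketched before Proposition~\ref{prop:stars} to the witness DFA $A$ of Figure~\ref{fig:Astar}, and then verify that both Proposition~\ref{prop:stars} and Proposition~\ref{prop:transtar} are attained with equality. The witness is essentially the one used by C\^ampeanu \emph{et al.} for the state complexity of star on complete DFAs, with the dead state suppressed; consequently the lower bound on states is obtained by adapting their reachability analysis. One partitions the reachable states of $B$ into those that do not yet contain $0$ (arising during a single traversal of $A$ before any final state has been revisited) and those that contain $0$ (arising once the star loop has been triggered), checks that each class contributes exactly the expected number of subsets so that the total is $2^{m-3}+2^{m-2}-1$, and verifies pairwise inequivalence via distinguishing suffixes along the chain structure of $A$.

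For the transition count, the plan is to instantiate Proposition~\ref{prop:transtar} with the parameters read off from $A$: alphabet size $k=3$, $f(L)=2$, $X=\emptyset$ (since every symbol is defined at the initial state), and $e_\tau(L)=1$ for each $\tau\in\{a,b,c\}$ (since among the two final states $m-2,m-1$ only $m-2$ carries outgoing transitions). The remaining refined terms $\overline{t}_\tau-\overline{s}_\tau-\overline{e}_\tau$ count the non-initial, non-final states at which a $\tau$-transition is missing. For $\tau\in\{a,b\}$ no such omission occurs, whereas the letter $c$ appears on an alternating subset of the intermediate states whose exact composition depends on the parity of $m$: this is the sole source of the parity split in the theorem and, after summing the corresponding powers of $2$, produces the corrections $2^{(m-2)/2}$ in the even case and $2^{m/2}$ in the odd case (with the integer exponent interpreted appropriately). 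Substituting into the formula of Proposition~\ref{prop:transtar} and simplifying yields the announced leading term $9\cdot 2^{m-3}$ together with the parity-dependent correction and the final $-2$.

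The main obstacle is the tightness direction: one must show that every transition predicted by the upper bound of Proposition~\ref{prop:transtar} is actually present in the minimal DFA for $\mathcal{L}(A)^\star$, i.e., that the subset construction creates no collisions between reachable images. Concretely, for each reachable $P\in Q_B$ and each $\tau\in\Sigma$, a word must be exhibited witnessing $P$ as the image of $\{0\}$, and a distinguishing extension separating $\delta_B(P,\tau)$ from every other reachable subset. Once this verification is carried out case by case along the chain structure of $A$, and the subsets are grouped according to whether they contain $0$, the two parity-dependent formulas in the theorem statement follow by direct arithmetic.
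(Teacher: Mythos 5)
Your overall strategy---apply the star construction to the witness of Figure~\ref{fig:Astar}, prove reachability and pairwise distinguishability of the subsets, and check that the bounds of Proposition~\ref{prop:stars} and Proposition~\ref{prop:transtar} are met with equality---is the intended one (the paper defers the details to its extended version). However, your instantiation of Proposition~\ref{prop:transtar} contains a concrete error that breaks the odd case. You claim $X=\emptyset$ ``since every symbol is defined at the initial state,'' but this is only true of witness (1), used for $m$ even. In witness (2), used for $m$ odd, the initial state has only an $a$-transition to state $1$ and a $b$-transition to state $m-1$; there is no $c$-transition leaving state $0$, so $s_c(L)=0$ and $X=\{c\}$. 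This matters quantitatively: for odd $m$ one gets $\overline{t}_c(L)=(m+1)/2$, $\overline{s}_c(L)=1$, $\overline{e}_c(L)=1$, hence $n_c=(m-3)/2$, and the two deductions $2^{n_c}$ (from the first sum) and $2^{n_c}$ (from the sum over $X$) together give $2\cdot 2^{(m-3)/2}=2^{(m-1)/2}=2^{\lfloor m/2\rfloor}$, which is exactly the correction in the theorem. With $X=\emptyset$ your arithmetic yields $9\cdot 2^{m-3}-2^{(m-3)/2}-2$, which does not equal the claimed value (e.g., for $m=5$ it gives $32$ instead of $30$), so your verification that the witness attains the stated transition count would fail for all odd $m$. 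The even case is fine as you describe it ($n_c=(m-2)/2$, $X=\emptyset$, total $9\cdot 2^{m-3}-2^{(m-2)/2}-2$; for $m=4$ this is $14$, which direct computation confirms).

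A secondary remark: the substantive content of the theorem is the tightness direction, i.e., that the reachable subsets collapse to exactly $2^{m-2}+2^{m-3}-1$ equivalence classes and that every surviving state carries exactly the predicted number of outgoing transitions. Your proposal correctly identifies this as the main obstacle but leaves it entirely as a ``case by case'' verification; note in particular that the subset construction here does produce strictly more reachable subsets than the final count (already for $m=4$ one reaches $8$ subsets that merge into $5$ classes), so the argument cannot be purely a reachability count---the merging pattern must be analyzed explicitly before the transitions are tallied.
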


\section{Reversal}
\label{sec:reverse}

Given an incomplete DFA $A=([0,m-1],\Sigma, \delta_A, 0, F_A)$, to
obtain a DFA $B$ that accepts $\mathcal{L}(A)^R$,  we  first
reverse all transitions of $A$ and then determinize the resulting NFA.

%use the same algorithm
%used for regular languages \cite{maia13:_incom_trans_compl_of_some_basic_operat}
%for the reversal operation to obtain a DFA accepting $\mathcal{L}(A)^R$.

\begin{proposition}  
  For any $m$-state %incomplete
  DFA $A$, with $m \leq 3$, accepting a finite language over an
  alphabet of size $k\geq 2$, $\sum_{i=0}^{l-1} k^i + 2^{m-l}-1$ states
  are sufficient for a DFA accepting $\mathcal{L}(A)^R$, where $l$ is
  the smallest integer such that $2^{m-l}\leq k^l$.
\end{proposition}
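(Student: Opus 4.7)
The plan is to construct the reverse NFA $A^R$ by reversing every transition of $A$ and swapping the roles of initial and final states, and then apply the subset construction to obtain an equivalent DFA $B$. Each reachable state of $B$ is a subset of $[0,m-1]$ of the form $S_w = \{q \in [0,m-1] \mid \delta_A(q, w^R) \in F_A\}$ for some $w \in \Sigma^\star$, so bounding the number of reachable states of $B$ amounts to bounding the number of distinct subsets $S_w$. I classify these subsets by the depth $i = |w|$ at which they first appear.

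For depths $i = 0, 1, \ldots, l-1$, the trivial bound $|\{S_w : |w|=i\}| \leq k^i$ (one subset per word of length $i$) contributes at most $\sum_{i=0}^{l-1} k^i$ reachable states. For depths $i \geq l$, I would use the following structural observation: since $A$ is a minimal DFA for a finite language, it is acyclic, and its states admit a topological order $0 < 1 < \cdots < m-1$ with every transition directed from a smaller to a larger index. Defining $\lambda(q)$ as the length of the longest path in $A$ from $q$ to a final state, a state at topological position $j$ satisfies $\lambda(q) \leq m-1-j$, hence the set $S^{(l)} = \{q \mid \lambda(q) \geq l\}$ has cardinality at most $m-l$. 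Since any $q \in S_w$ admits a word of length $|w|$ leading to $F_A$, we have $S_w \subseteq S^{(|w|)} \subseteq S^{(l)}$ whenever $|w| \geq l$, so the subsets reachable at depths $\geq l$ all lie inside the powerset of a set of size $\leq m-l$ and contribute at most $2^{m-l}$ distinct subsets in total.

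Adding the two bounds and subtracting one for the empty subset (the dead state, which is omitted from an incomplete DFA) gives $\sum_{i=0}^{l-1} k^i + 2^{m-l} - 1$. The choice of $l$ as the smallest integer with $2^{m-l} \leq k^l$ is the optimal crossover: below depth $l$ the word-counting bound $k^i$ is the tighter, and from depth $l$ onwards the powerset bound $2^{m-l}$ takes over and absorbs all remaining depths at once. The main obstacle is the structural inequality $|S^{(l)}| \leq m-l$, which rests on the acyclic topological ordering of the minimal DFA and the bound $\lambda(q) \leq m-1-j$ for a state at position $j$; once this is in hand, the rest of the argument is straightforward bookkeeping.
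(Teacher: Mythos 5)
Your argument is correct and is essentially the same as the one the paper relies on: the paper's proof is a one-line deferral to Theorem~5 of C\^ampeanu \emph{et al.} (count reachable subsets of the determinized reverse NFA level by level, bounding levels $i<l$ by $k^i$ and absorbing all levels $\geq l$ into the powerset of the $m-l$ states that still admit a path of length $l$ to a final state), plus the removal of the dead state, which is exactly your $-1$ for the empty subset. Your writeup merely fills in the details that the citation leaves implicit.
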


\begin{proof}
  The proof is similar to the proof of~\cite[Theorem
  5]{campeanu99:_state_compl_of_basic_operat_finit_languag}. We only
  need to remove the dead state.
\end{proof}

\begin{proposition}  
\label{prop:revtran}
For any finite language $L$ with $isc(L)=m$  and if  $l$ is
  the smallest integer such that $2^{m-l}\leq k^l$, one has, if $m$ is odd,
% $$
% itc(L^{R}) \leq \sum_{i=0}^{t} k^i-1 + \sum_{\tau \in \Sigma} \left(
%   2^{m-t}-2^{(\sum_{i=0}^{t-1}(out(i,\tau)))+1}\right),
% $$ 
% \noindent or, if $m$ is even, 
% $$
% itc(L^{R}) \leq \sum_{i=0}^{t-1} k^i-1 + \sum_{\tau \in \Sigma}
% \left(2^{m-t}-2^{(\sum_{i=0}^{t-2}(out(i,\tau)))+1} +k^{t-1}-\overline{in}(t,\tau)\right).
% $$ 
$$itc(L^{R}) \leq \sum_{i=0}^{l} k^i-1+k  2^{m-l} -\sum_{\tau \in \Sigma}2^{\sum_{i=0}^{l-1}\overline{t}_\tau(L,i)+1},$$
\noindent or, if $m$ is even, 
$$itc(L^{R}) \leq \sum_{i=0}^{l} k^i-1+k  2^{m-l} -\sum_{\tau
  \in \Sigma}\left (2^{\sum_{i=0}^{l-2}\overline{t}_\tau(L,i)+1}-c_\tau(l)\right),$$
%\noindent where $ci_\tau(l)$ is $1$ if do not exist any 
%$\tau$-transition reaching the state $l$ and $0$ otherwise.   
\noindent where $c_\tau(l)$ is $0$ if there exists a $\tau$-transition reaching the state $l$ and $1$ otherwise.

\end{proposition}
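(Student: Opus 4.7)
The argument extends the state-complexity counting of the previous proposition. Let $B$ be the subset-construction determinization of the reversed NFA of $A$. As in the state-count proof, the reachable states of $B$ split into a ``tree'' portion at backward-levels $[0,l-1]$ (at most $\sum_{i=0}^{l-1}k^i$ states) and a ``power-set'' portion at backward-levels $\geq l$ (at most $2^{m-l}-1$ non-empty subsets of the $A$-states farthest from $F_A$ in topological order). Granting every such state, plus a bookkeeping ``empty-set'' state, all $k$ outgoing transitions gives the over-count
\[
k\Bigl(\sum_{i=0}^{l-1}k^i+2^{m-l}\Bigr)\;=\;\sum_{i=0}^{l}k^i-1+k\cdot 2^{m-l},
\]
which is the leading term of the stated bound.

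From this over-count I subtract the $\tau$-transitions that are not actually present in $B$. By the subset construction, $\delta_B(S,\tau)=\{q\in[0,m-1]\mid\delta_A(q,\tau)\in S\}$, which is empty precisely when no $q\in S$ carries an outgoing $\tau$-transition in $A$. Hence the missing $\tau$-transitions of $B$ are in bijection with the states $S$ (including the empty-set dummy) satisfying $S\subseteq U_\tau:=\{q\mid\overline{t}_\tau(L,q)=1\}$. The count per symbol equals $2^{|U_\tau\cap R|}$, where $R$ is the segment of $A$-states that forms the power-set portion, with a small adjustment for the tree-portion overlap and the empty-set dummy.

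The parity of $m$ enters here. For odd $m$, the segment $R$ sits inside $[0,l-1]$ and combining the resulting $2^{|U_\tau\cap[0,l-1]|}$ count with the empty-set dummy yields the exponent $\sum_{i=0}^{l-1}\overline{t}_\tau(L,i)+1$; summing over $\tau\in\Sigma$ gives the odd-$m$ subtraction. For even $m$, the segment is one shorter and the boundary state $l$ straddles the two portions; whether a $\tau$-predecessor of $l$ exists in $A$, recorded by $c_\tau(l)$, determines whether one extra subset must be removed from the count, which is precisely the $-c_\tau(l)$ adjustment. Subtracting the resulting per-symbol counts from the base gives the two inequalities in the statement.

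The principal obstacle is the parity-dependent bookkeeping around level $l$: rigorously identifying which $A$-states form the power-set window in each parity, and verifying that the empty-set dummy together with $c_\tau(l)$ capture the boundary behaviour without double-counting. Once that identification is in place, the computation parallels the counting step in the state-complexity proof of C\^ampeanu et al., refined to count transitions instead of states.
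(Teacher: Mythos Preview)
Your overall plan mirrors the paper's: decompose the reachable states of $B$ into the $k$-ary tree portion (backward levels $0$ through $l-1$) and the power-set portion, over-count by granting every such state all $k$ outgoing transitions to obtain $\sum_{i=0}^{l}k^i-1+k\cdot 2^{m-l}$, and then subtract the $\tau$-transitions that are actually absent. The parity-dependent boundary bookkeeping you sketch is likewise what the paper does.

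There is, however, a concrete error in the step you identify as the heart of the subtraction. You correctly write $\delta_B(S,\tau)=\{q\mid\delta_A(q,\tau)\in S\}$, but then claim this is empty ``precisely when no $q\in S$ carries an outgoing $\tau$-transition in $A$.'' That is false: $\delta_B(S,\tau)=\emptyset$ holds exactly when no $\tau$-transition of $A$ has its \emph{target} in $S$, i.e.\ when every state of $S$ satisfies $in_\tau(A,\cdot)=0$. For a counterexample take $S=\{s\}$ with $\delta_A(s,\tau)$ undefined but $\delta_A(q,\tau)=s$ for some $q$: your criterion declares the $\tau$-transition from $S$ missing, yet $\delta_B(S,\tau)=\{q\}\neq\emptyset$. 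Hence your set $U_\tau=\{q\mid\overline{t}_\tau(L,q)=1\}$ is the wrong object; the subsets to be removed are those contained in $\{q\mid in_\tau(A,q)=0\}$. The paper's own justification of the ``$+1$'' in the exponent makes this explicit: it comes from the initial state of $A$ having ``no transitions reaching it,'' i.e.\ no \emph{incoming} transitions. With your characterisation the count you subtract is not, in general, a lower bound on the number of absent transitions, so the stated inequality need not follow. Replacing ``no outgoing $\tau$ from states of $S$'' by ``no incoming $\tau$ to states of $S$'' is precisely the fix that makes the remainder of your counting go through.
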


\begin{proof}
  The smallest $l$ that satisfies $2^{m-l}\leq k^l$ is the same for
  $m$ and $m+1$, and because of that we have to consider whether $m$
  is even or odd.  Suppose $m$ odd. Let $T_1$ be set of transitions
  corresponding to the first $\sum_{i=0}^{l-1} k^i$ states and $T_2$
  the set corresponding to the other $2^{m-l}-1$ states.  We have that
  $|T_1|=\sum_{i=0}^{l-1} k^i-1$, because the initial state has no
  transition reaching it.  As the states of DFA $B$ for the reversal
  are sets of states of DFA $A$ we also consider each
  $\tau$-transition as a set. If all $\tau$-transitions were defined
  its number in $T_2$ would be $2^{m-l}$.  Note that the transitions
  of the $m-l$ states correspond to the transitions of the states
  between $0$ and $l-1$ in the initial DFA $A$, thus we remove the
  sets that has no  $\tau$-transitions. As the initial
  state of $A$ has no transitions reaching it, we need to add one to the number
  of missing $\tau$-transitions.  Thus, $|T_2|=\sum_{\tau \in
    \Sigma}2^{m-l}-2^{(\sum_{i=0}^{l-1}(\overline{t}_\tau(i)))+1}$.

Let us consider $m$ even. In this case we need also to consider the
set of transitions that connect  the states with the  highest level in the first set with the
states with the lowest level in the second set.  As the 
 highest level is $l-1$, we have to remove the possible transitions
 that reach the state $l$ in DFA $A$.
\end{proof}
The witness family for this operation is the one presented by Câmpeanu
\emph{et al.} but we omit the dead state (see Figure~\ref{fig:Arev}).

\begin{figure}[htb]
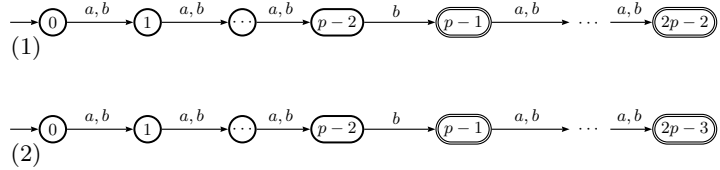

\centering
  \begin{tabular}[h]{lc}
{\small(1)} & {\centering
 \TinyPicture\VCDraw{
\begin{VCPicture}{(20,2)(0,0)}
\State[0]{(0,0)}{A} \State[1]{(3,0)}{B} \State[\cdots]{(6,0)}{C}
\StateVar[p-2]{(9,0)}{D} 
\FinalStateVar[p-1]{(13,0)}{E} 
\FinalStateVar[2p-2]{(20,0)}{G} 
\SetStateLineStyle{none}
\FinalStateVar[\cdots]{(17,0)}{F} 
\Initial{A} 
\EdgeL{A}{B}{a,b}
\EdgeL{B}{C}{a,b}
\EdgeL{C}{D}{a,b}
\EdgeL{D}{E}{b}
\EdgeL{E}{F}{a,b}
\EdgeL{F}{G}{a,b}
\end{VCPicture}
}}
\\
{\small(2)} & {\centering
 \TinyPicture\VCDraw{
\begin{VCPicture}{(20,2)(0,0)}
\State[0]{(0,0)}{A} \State[1]{(3,0)}{B} \State[\cdots]{(6,0)}{C}
\StateVar[p-2]{(9,0)}{D} 
\FinalStateVar[p-1]{(13,0)}{E} 
\FinalStateVar[2p-3]{(20,0)}{G} 
\SetStateLineStyle{none}
\FinalStateVar[\cdots]{(17,0)}{F} 
\Initial{A} 
\EdgeL{A}{B}{a,b}
\EdgeL{B}{C}{a,b}
\EdgeL{C}{D}{a,b}
\EdgeL{D}{E}{b}
\EdgeL{E}{F}{a,b}
\EdgeL{F}{G}{a,b}
\end{VCPicture}
}}
\end{tabular}
\caption{DFA $A$ with $m=2p-1$ states (1) and  with $m=2p$ (2).}
\label{fig:Arev}
\end{figure}

\begin{theorem}
  For any integer $m\geq 4$ there exists an $m$-state DFA $A$
  accepting a finite language, such that any DFA accepting
  $\mathcal{L}(A)^R$ needs at least $3\cdot 2^{p-1}+2$ states and
  $3\cdot 2^{p}-8$ transitions if $m=2p-1$ or $2^{p+1}-2$ states and
  $2^{p+2}-7$ transitions if $m=2p$.
\end{theorem}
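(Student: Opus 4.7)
The plan is to take the witness family from Figure~\ref{fig:Arev} (the same family used by C\^ampeanu \emph{et al.}, with the dead state removed) over $\Sigma=\{a,b\}$, so $k=2$, and verify that the DFA obtained by reversing and determinizing attains the upper bounds from Propositions~\ref{prop:revtran} (and its state-complexity counterpart) for reversal. The construction itself is the standard one: reverse every transition, treat the original set of final states $F_A=[p-1,m-1]$ as the initial set, then apply the subset construction.

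First, I would compute the parameter $l$. The condition $2^{m-l}\le k^l=2^l$ reduces to $l\ge m/2$, so $l=p$ in both parities of $m$. Substituting $l=p$ into $\sum_{i=0}^{l-1}k^i+2^{m-l}-1$ yields $2^p+2^{m-p}-2$, which simplifies to the claimed state counts. For the transition count I would read off the relevant refined measures directly from the witness: exactly one missing $a$-transition (the one from $p-2$ to $p-1$) among positions $[0,l-1]$, no missing $b$-transitions there, and in the even case $c_a(l)=c_b(l)=0$ because both edges entering state $l=p$ are present. Substituting these values into the two formulas of Proposition~\ref{prop:revtran} should collapse them to $3\cdot 2^p-8$ in the odd case and $2^{p+2}-7$ in the even case.

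The substantive step is to show that the determinization really reaches these bounds --- i.e.\ that the subset construction produces the expected number of reachable, pairwise inequivalent states, together with the expected number of transitions. I would argue by induction on the length of the input word consumed. Starting from $F_A=[p-1,m-1]$, a read of $b$ shifts the current subset one position to the left (dropping $0$ when present), while a read of $a$ does the same but also removes the element $p-1$; this is the only asymmetry between the two symbols and is exactly what allows every subset of a translated copy of the ``back half'' to be realized independently of the shift. Once a set has been pushed entirely into the ``front half'' $[0,l-1]$, the chain topology forces every further reachable set to be a singleton, and the remaining reachable states are indexed bijectively by words of length at most $l-1$, contributing the $\sum_{i=0}^{l-1}k^i$ term. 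Pairwise inequivalence follows by exhibiting, for any two distinct reachable subsets, a word read afterwards that reaches a final state from one but not the other --- using the fact that the original DFA $A$ is a simple chain.

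The main obstacle is the bookkeeping at the interface between the front and back halves (levels around $l$): one must verify that the transitions predicted by Proposition~\ref{prop:revtran} are all realised and that no extra states appear. This is where the parity of $m$ really bites, because for $m$ even there is one more back-half level and the $c_\tau(l)$ correction becomes nontrivial, accounting for the one-unit discrepancy between the two transition counts. The rest is routine verification, paralleling the argument of C\^ampeanu \emph{et al.}\ for states.
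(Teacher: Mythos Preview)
Your overall strategy matches the paper's: take the witness of Figure~\ref{fig:Arev}, observe that $l=p$ for $k=2$, and verify that the subset construction on the reversed NFA attains the upper bounds of Proposition~\ref{prop:revtran} and its state analogue. The paper itself defers this proof to the extended version, so there is no line-by-line argument to compare against; the plan you describe is the intended one.

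However, your reachability sketch contains a real error that breaks the count. The claim that ``once a set has been pushed entirely into the front half $[0,l-1]$, the chain topology forces every further reachable set to be a singleton'' is false, and you have the two summands of the state bound attached to the wrong phases of the computation. In the determinized reversal, it is the \emph{first} $l$ levels (words of length $0,\dots,l-1$) that are in bijection with input words, contributing the $\sum_{i=0}^{l-1}k^i$ term; at those levels the reachable sets still contain large elements and are not inside $[0,l-1]$. From level $l$ onward the reachable sets do lie in $[0,m-l-1]$, but there they range over \emph{all} $2^{m-l}-1$ nonempty subsets, not just singletons. For example with $m=5$, $p=l=3$: the set $\{1,2\}$ is reachable, and on input $b$ it moves to $\{0,1\}$, still a two-element set. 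As written, your argument never produces the $2^{m-l}-1$ subsets and the state count would not close.

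A secondary caveat: your ``substitute and simplify'' step does not actually land on the printed constants. Plugging $l=p$ into the state bound for $m=2p-1$ gives $2^p+2^{p-1}-2=3\cdot 2^{p-1}-2$, not the $3\cdot 2^{p-1}+2$ in the statement; and with $c_a(p)=c_b(p)=0$ the even-case transition formula of Proposition~\ref{prop:revtran} yields $2^{p+2}-8$ rather than $2^{p+2}-7$. These off-by-constants may well be typographical, but you should flag the mismatch rather than assert that the substitution ``simplifies to the claimed'' values; in any case, a clean tightness proof must exhibit the states and transitions directly rather than rely on the upper-bound formula alone.
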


\section{Final Remarks}
\label{sec:conclusao}
In this paper we studied the incomplete state and transition
complexity of basic regularity preserving operations on finite
languages.
Table~\ref{table:wtc} summarizes some of those results.
For unary finite languages the incomplete transition complexity  is
equal to the incomplete state complexity of that language, which is always
equal to the state complexity of the language minus one.  

As future work we plan to study the average transition complexity of
these operations following the lines of Bassino \emph{et al.}~\cite{bassino10:_averag_state_compl_of_ration}.

\vspace{-0.3cm}

\bibliography{fa1203}{}

\bibliographystyle{splncs03}

\end{document}